 \newcommand{\remove}[1]{}
\newlength{\savedparindent}
\newcommand{\SaveIndent}{\setlength{\savedparindent}{\parindent}}
\newcommand{\RestoreIndent}{\setlength{\parindent}{\savedparindent}}
\newcommand{\si}[1]{#1}
\newcommand{\Ex}[1]{\mathop{\mathbf{E}}\!\pbrcx{#1}}
\newcommand{\Arr}{\mathop{\mathrm{\EuScript{A}}}}
\newcommand{\brc}[1]{\left\{ {#1} \right\}}
\newcommand{\pth}[2][\!]{#1\left({#2}\right)}
\newcommand{\eqlab}[1]{\label{equation:#1}}
\newcommand{\Eqref}[1]{\si{Eq.~(\ref{equation:#1})}}
\newcommand{\MakeBig}{\rule[-.2cm]{0cm}{0.4cm}}
\newcommand{\MakeSBig}{\rule[-.15cm]{0cm}{0.1cm}}
\newcommand{\obj}{f}
\newcommand{\objA}{g}
\newcommand{\ObjSet}{\mathsf{F}}
\newcommand{\ObjSetA}{\mathsf{H}}
\newcommand{\ObjSetB}{\mathsf{K}}
\newcommand{\ArrX}[1]{\Arr\pth{#1}}
\newcommand{\VertX}[1]{\EuScript{V}\pth[]{#1}}
\newcommand{\Union}[1]{\EuScript{U}\pth{#1}}
\newcommand{\UnionComp}[2][\!]{\mathsf{u}\pth[#1]{#2}}
\newcommand{\VD}[2][\!]{\EuScript{VD}\pth[#1]{#2}}
\newcommand{\sep}[1]{\,\left|\, {#1} \MakeBig\right.}
\newcommand{\LP}{\textsf{LP}\xspace}
\newcommand{\PTAS}{\textsf{PTAS}\xspace}
\newcommand{\pnt}{\mathsf{p}}
\newcommand{\PntSet}{\mathsf{P}}
\newcommand{\Energy}{\EuScript{E}}
\newcommand{\EnergyX}[1]{\EuScript{E}\pth{#1}}
\newcommand{\cardin}[1]{\left| {#1} \right|}
\definecolor{blue25}{rgb}{0,0,0.25}
\newcommand{\emphic}[2]{%
     \textcolor{blue25}{%
         \textbf{\emph{#1}}}%
         \index{#2}}
\newcommand{\emphi}[1]{\emphic{#1}{#1}}
\newcommand{\depth}{\mathrm{depth}}
\newcommand{\Opt}{\mathrm{Opt}}
\newcommand{\opt}{\mathrm{opt}}
\newcommand{\RSample}{\mathsf{R}}
\newcommand{\RSampleB}{I} 
\newcommand{\lemlab}[1]{\label{lemma:#1}}
\newcommand{\lemref}[1]{Lemma~\ref{lemma:#1}}
\renewcommand{\th}{t{}h\xspace}
\newcommand{\thmref}[1]{Theorem~\ref{theo:#1}}
\newcommand{\thmlab}[1]{\label{theo:#1}}
\newtheorem{theorem}{Theorem}[section] 
\newtheorem{lemma}[theorem]{Lemma}
\newtheorem{problem}[theorem]{Problem}
\newtheorem{corollary}[theorem]{Corollary}
\newcommand{\pbrcx}[1]{\left[ {#1} \right]}
\newcommand{\Prob}[1]{\mathop{\mathbf{Pr}}\!\pbrcx{#1}}
\newcommand{\myqedsymbol}{\rule{2mm}{2mm}}
\newenvironment{proof}{\trivlist\item[]\emph{Proof}:}%
                  {\unskip\nobreak\hskip 1em plus 1fil\nobreak%
                           \myqedsymbol
                           \parfillskip=0pt%
                           \endtrivlist}
\newcommand{\D}{\Delta_{\textrm{avg}}}
\newcommand{\seclab}[1]{\label{sec:#1}}
\newcommand{\secref}[1]{Section~\ref{sec:#1}}
\newcommand{\Turan}{T{u}r\'an\xspace}
\newcommand{\etal}{\textit{et~al.}\xspace}
\providecommand{\ComplexityClass}[1]{{{\textcolor[named]{OliveGreen}{%
      \textsc{#1}}}}}
\providecommand{\NPComplete}{\ComplexityClass{NP-Complete}\xspace}
\newcommand{\eps}{\varepsilon}
\newcommand{\atgen}{\symbol{'100}}
\newcommand{\TimothyThanks}[1]{\thanks{%
      School of Computer Science; University of Waterloo; 200
      University Ave West; Waterloo, Ontario N2L 3G1; Canada; 
      {\tt tmchan\atgen{}uwaterloo.ca}; {\tt
         http://www.cs.uwaterloo.ca/\string~tmchan/}. #1}}
\newcommand{\SarielThanks}[1]{\thanks{Department of Computer
      Science; 
      University of Illinois; 
      201 N. Goodwin Avenue;
      Urbana, IL, 61801, USA;
      {\tt sariel\atgen{}uiuc.edu}; {\tt
         \url{http://www.uiuc.edu/\string~sariel/}.} #1}}
\newcommand{\GraphA}{\EuScript{G}_1}
\newcommand{\GraphB}{\EuScript{G}_2}
\renewcommand{\D}{\Delta}
\newcommand{\OptSol}{\ensuremath{\EuScript{S}}\xspace}
\newcommand{\OptHigh}{\EuScript{S}_{>b}}
\newcommand{\OptLow}{\EuScript{S}_{\leq b}}
\newcommand{\cUnion}{\varrho}
\newcommand{\MySol}{\ensuremath{\mathsf{L}}\xspace}
\newcommand{\szOpt}{s}
\newcommand{\szMy}{\ell}
\newcommand{\ceil}[1]{\left\lceil {#1} \right\rceil}
\newcommand{\ds}{\displaystyle}
\newcommand{\Bronnimann}{Br\"onnimann\xspace}
\newcommand{\Nbr}[1]{\Gamma\pth{#1}}
\newcommand{\NbrZ}[1]{\overline{\Gamma}\pth{#1}}
\newcommand{\constA}{c_1}
\newcommand{\constB}{c_2}
\newcommand{\constC}{c_3}
\newcommand{\constD}{c_4}
\newcommand{\qedsign}{\hfill{\hfill\rule{2mm}{2mm}}}
\newcommand{\permut}[1]{\left\langle {#1} \right\rangle}
\newcommand{\trap}{\Delta}
\newcommand{\PrcFunc}[2]{\alpha_{#1}\pth{#2}}
\newcommand{\DistNN}[2]{\mathsf{d}_{#1}\pth{#2}}
\newcommand{\fDist}{\nu}
\newcommand{\fDistX}[1]{\fDist\pth{#1}}
\newcommand{\Prm}{\Pi}
\newcommand{\prm}{\pi}
\newcommand{\resistC}{\eta}
\newcommand{\resistY}[2]{\resistC\pth{#1, #2}}
\newcommand{\ISet}{I}
\newcommand{\CSet}{C}
\renewcommand{\Re}{{\rm I\!\hspace{-0.025em} R}}
\begin{document}


\title{Approximation Algorithms for Maximum Independent Set of
   Pseudo-Disks\thanks{A preliminary version of the paper appeared in
      a SoCG 2009 \cite{ch-aamis-09}.}  }

\author{Timothy M. Chan\TimothyThanks{} %
   \and %
   Sariel Har-Peled\SarielThanks{}}

\date{\today}

\maketitle

\begin{abstract}
    We present approximation algorithms for maximum independent set of
    pseudo-disks in the plane, both in the weighted and unweighted
    cases.  For the unweighted case, we prove that a local search
    algorithm yields a \PTAS.  For the weighted case, we suggest a
    novel rounding scheme based on an \LP relaxation of the problem,
    which leads to a constant-factor approximation.

    Most previous algorithms for maximum independent set (in geometric
    settings) relied on packing arguments that are not applicable in
    this case. As such, the analysis of both algorithms requires some
    new combinatorial ideas, which we believe to be of independent
    interest.
\end{abstract}





\section{Introduction}

Let $\ObjSet = \brc{\obj_1, \ldots, \obj_n}$ be a set of $n$ objects
in the plane, with weights $w_1,w_2, \ldots, w_n > 0$, respectively.
In this paper, we are interested in the problem of finding an
independent set of maximum weight. Here a set of objects is
\emphi{independent}, if no pair of objects intersect.

A natural approach to this problem is to build an \emphi{intersection
   graph} $G=(V,E)$, where the objects form the vertices, and two
objects are connected by an edge if they intersect, and weights are
associated with the vertices. We want the maximum independent set in
$G$. This is of course an \NPComplete problem, and it is known that no
approximation factor is possible within $\cardin{ V}^{1-\eps }$ for
any $\varepsilon >0$ if $\textsf{NP}\neq\textsf{ZPP}$
\cite{h-chaw-96}. In fact, even if the maximum degree of the graph is
bounded by $3$, no \PTAS is possible in this case \cite{bf-apisp-99}.

In geometric settings, better results are possible. If the objects are
fat (e.g., disks and squares), \PTAS{}s are known.  One approach
\cite{c-ptasp-03,ejs-ptasg-05} relies on a hierarchical spatial
subdivision, such as a quadtree, combined with dynamic programming
techniques \cite{a-ptase-98}; it works even in the weighted case.
Another approach \cite{c-ptasp-03} relies on a recursive application
of a nontrivial generalization of the planar separator theorem
\cite{lt-stpg-79, sw-gsta-98}; this approach is limited to the
unweighted case.  If the objects are not fat, only weaker results are
known. For the problem of finding a maximum independent set of
unweighted axis-parallel rectangles, an $O( \log \log
n)$-approximation algorithm was recently given by Chalermsook and
Chuzhoy \cite{cc-misr-09}. For line segments, a roughly $O(
\sqrt{\Opt})$-approximation is known \cite{am-isigc-06};
recently, Fox and Pach~\cite{fp-cinig-11} have improved
the approximation factor to $n^\eps$ for not only for line segments
but curve segments that intersect a constant number of times.

\medskip

In this paper we are interested in the problem of finding a large
independent set in a set of weighted or unweighted pseudo-disks.  A
set of objects is a collection of \emphi{pseudo-disks}, if the
boundary of every pair of them intersects at most twice.  This case is
especially intriguing because previous techniques seem powerless: it
is unclear how one can adapt the quadtree
approach~\cite{c-ptasp-03,ejs-ptasg-05} or the generalized separator
approach~\cite{c-ptasp-03} for pseudo-disks.
  
Even  a constant-factor approximation in the unweighted case is not
easy.  Consider the most obvious greedy strategy for disks (or fat
objects): select the object $\obj_i\in\ObjSet$ of the smallest radius,
remove all objects that intersect $\obj_i$ from $\ObjSet$, and repeat.
This is already sufficient to yield a constant-factor approximation by
a simple packing argument~\cite{ekns-ddsfo-00,mbhrr-shudg-95}.
However, even this simplest algorithm breaks down for
pseudo-disks---as pseudo-disks are defined ``topologically'', how
would one define the ``smallest'' pseudo-disk in a collection?

\paragraph{Independent set via local search.}
Nevertheless, we are able to prove that a different strategy can yield
a constant-factor approximation for unweighted pseudo-disks: local
search.  
In the general settings, local search was used to get (roughly) a
$\Delta/4$ approximation to independent set, where $\Delta$ is the
maximum degree in the graph, see \cite{h-aisg-98} for a survey.
In the geometric settings, Agarwal and Mustafa~\cite[Lemma
4.2]{am-isigc-06} had a proof that a local search algorithm gives a
constant-factor approximation for the special case of pseudo-disks
that are rectangles; their proof does not immediately work for
arbitrary pseudo-disks.  Our proof provides a generalization of their
lemma.

In fact, we are able to do more: we show that local search can
actually yield a \PTAS for unweighted pseudo-disks!  This gives us by
accident a new \PTAS for the special case of disks and squares.
Though the local-search algorithm is slower than the quadtree-based
\PTAS in these special cases~\cite{c-ptasp-03}, it has the advantage
that it only requires the intersection graph as input, not its
geometric realization; previously, an algorithm with this property was
only known in further special cases, such as unit
disks~\cite{nhk-rpmwi-05}.  Our result uses the planar separator
theorem, but in contrast to the separator-based method in
\cite{c-ptasp-03}, a standard version of the theorem suffices and is
needed only in the analysis, not in the algorithm itself.

Planar graphs are special cases of disk intersection graphs, and so
our result applies.  Of course, {\PTAS}s for planar graphs have been
around for quite some time~\cite{lt-stpg-79, b-aancp-94}, but the fact
that a simple local search algorithm already yields a \PTAS for planar
graphs is apparently not well known, if it was known at all.

We can further show that the same local search algorithm gives a \PTAS 
for independent set for
fat objects in any fixed dimension, reproving known results in
\cite{c-ptasp-03,ejs-ptasg-05}.

This strategy, unfortunately, works only in the unweighted case.

\paragraph{Independent set via \LP.}
It is easy to extract a large independent set from a sparse unweighted
graph.  For example, greedily, we can order the vertices from lowest
to highest degree, and pick them one by one into the independent set,
if none of its neighbors was already picked into the independent
set. Let $d_G$ be the average degree in $G$.  Then a constant fraction
of the vertices have degree $O(d_G)$, and the selection of such a
vertex can eliminate $O(d_G)$ candidates.  Thus, this yields an
independent set of size $\Omega(n/d_G)$.  Alternatively, for better
constants, we can order the vertices by a random permutation and do
the same.  Clearly, the probability of a vertex $v$ to be included in
the independent set is $1/(d(v)+1)$.  An easy calculation leads to
\Turan's theorem, which states that any graph $G$ has an independent
set of size $\geq n/(d_G+1)$ \cite{as-pm-00}.
 
Now, our intersection graph $G$ may not be sparse.  We would like to
``sparsify'' it, so that the new intersection graph is sparse and the
number of vertices is close to the size of the optimal solution.
Interestingly, we show that this can be done by solving the \LP
relaxation of the independent set problem. The relaxation provides us
with a fractional solution, where every object $\obj_i$ has value $x_i
\in [0,1]$ associated with it. Rounding this fractional solution into
a feasible solution is not a trivial task, as no such scheme exists in
the general case. To this end, we prove a technical lemma (see
\lemref{first}) that shows that the total sum of terms of the form
$x_i x_j$, over pairs $\obj_i\obj_j$ that intersect is bounded by the
boundary complexity of the union of $\Energy$ objects of $\ObjSet$,
where $\Energy$ is the size of the fractional solution. The proof
contains a nice application of the standard Clarkson technique
\cite{cs-arscg-89}.

This lemma implies that on average, if we pick $\obj_i$ into our
random set of objects, with probability $x_i$, then the resulting
intersection graph would be sparse. This is by itself sufficient to
get a constant-factor approximation for the unweighted case. For the
weighted case, we follow a greedy approach: we examine the objects in
a certain order (based on a quantity we call ``resistance''), and
choose an object with probability around $x_i$, {\em on condition
   that\/} it does not intersect any previously chosen object.  We
argue, for our particular order, that each object is indeed chosen
with probability $\Omega(x_i)$.  This leads to a constant-factor
approximation for weighted pseudo-disks.

Interestingly, the rounding scheme we use can be used in more general
settings, when one tries to find an independent set that maximizes a
submodular target function. See \secref{submodular} for details.

\paragraph{Linear union complexity.}
Our \LP analysis works more generally for any class of objects with
linear union complexity.  We assume that the boundary of the union of
any $k$ of these objects has at most $\cUnion k$ vertices, for some
fixed $\cUnion$.  For pseudo-disks, the boundary of the union is made
out of at most $6n-12$ arcs, implying $\cUnion = 6$ in this case
\cite{klps-ujrcf-86}.

A family $\ObjSet$ of simply connected regions bounded by simple
closed curves in general position in the plane is
\emphi{$k$-admissible} (with $k$ even) if for any pair $\obj_i, \obj_j
\in \ObjSet$, we have: (i) $\obj_i \setminus \obj_j$ and $\obj_j
\setminus \obj_i$ are connected, and (ii) their boundary intersect at
most $k$ times. Whitesides and Zhao \cite{wz-kacjc-90} showed that the
union of such $n$ objects has at most $3kn - 6$ arcs; that is,
$\cUnion =3k$.  So, our \LP analysis applies to this class of objects
as well.  For more results on union complexity, see the sermon by
Agarwal \etal \cite{aps-sugo-08}.

Our local-search \PTAS works more generally for unweighted admissible
regions in the plane.  For an arbitrary class of unweighted objects with
linear union complexity in the plane, local search
still yields a constant-factor approximation.

\paragraph{Rectangles.}
\LP relaxation has been used before, notably, in Chalermsook and
Chuzhoy's recent breakthrough in the case of axis-parallel
rectangles~\cite{cc-misr-09}, but their analysis is quite complicated.
Although rectangles do not have linear union complexity in general, we
observe in \secref{rectangles} that a variant of our approach can
yield a readily accessible proof of a sublogarithmic $O(\log
n/\log\log n)$ approximation factor for rectangles, even in the
weighted case, where previously only logarithmic approximation is
known \cite{aks-lpmis-98,bdmr-eaatp-01,c-nmisr-04} (Chalermsook and
Chuzhoy's result is better but currently is applicable only to
unweighted rectangles).

\paragraph{Discussion.}
Local search and \LP relaxation are of course staples in the
design of approximation algorithms, but are not seen as often
in computational geometry.  Our main contribution lies in 
the fusion of these approaches with combinatorial geometric
techniques.

In a sense, one can view our results as complementary to the known
results on approximate geometric set cover by \Bronnimann and
Goodrich~\cite{bg-aoscf-95} and Clarkson and
Varadarajan~\cite{cv-iaags-07}.  They consider the problem of finding
the minimum number of objects in $\ObjSet$ to cover a given point set.
Their results imply a constant-factor approximation for families of
objects with linear union complexity, for instance.  One version of
their approaches is indeed based on \LP relaxation~\cite{l-upsda-01,
   ers-hsvcs-05}.  The ``dual'' hitting set problem is to find the
minimum number of points to pierce a given set of objects.
\Bronnimann and Goodrich's result combined with a recent result of
Pyrga and Ray~\cite{pr-nepen-08} also implies a constant-factor
approximation for pseudo-disks for this piercing problem.  The
piercing problem is actually the dual of the independent set problem
(this time, we are referring to linear programming duality).  We
remark, however, that the rounding schemes for set cover and piercing
are based on different combinatorial techniques, namely, $\eps$-nets,
which are not sufficient to deal with independent set (one obvious
difference is that independent set is a maximization problem).

In \thmref{ratio}, we point out a combinatorial consequence of our
\LP analysis: for any collection of unweighted pseudo-disks, the ratio
of the size of the minimum piercing set to the size of maximum
independent set is at most a constant.  (It is easy to see that the
ratio is always at least 1; for disks or fat objects, it is not
difficult to obtain a constant upper bound by packing arguments.)
This result is of independent interest; for example, getting tight
bounds on the ratio for axis-parallel rectangles is a long-standing
open problem.

In an interesting independent development, 
Mustafa and Ray~\cite{mr-pghsp-09} have recently
applied the local search paradigm 
to obtain a \PTAS for the geometric set cover
problem for (unweighted) pseudo-disks and admissible regions.

\section{Preliminaries}

In the following, we have a set $\ObjSet$ of $n$ objects in the plane,
such that the union complexity of any subset $X \subseteq \ObjSet$ is
bounded by $\cUnion \cardin{X}$, where $\cUnion$ is a constant.  Here,
the \emphi{union complexity} of $X$ is the number of arcs on the
boundary of the union of the objects of $X$.  Let $\ArrX{\ObjSet}$
denote the arrangement of $\ObjSet$, and $\VertX{\ObjSet}$ denote the
set of vertices of $\ArrX{\ObjSet}$.

In the following, we assume that deciding if two objects intersect
takes constant time.

\section{Approximation by Local Search: Unweighted Case}

\subsection{The algorithm}

In the unweighted case, we may assume that no object is fully
contained in another.

We say that a subset $\MySol$ of $\ObjSet$ is \emphi{$b$-locally
   optimal} if $\RSampleB$ is an independent set and one cannot obtain
a larger independent set from $\RSampleB$ by deleting at most $b$
objects and inserting at most $b+1$ objects of $\ObjSet$.

Our algorithm for the unweighted case simply returns a $b$-locally
optimal solution for a suitable constant $b$, by performing a local
search. We start with $\MySol \leftarrow \emptyset$.  For for every
subset $X \subseteq \ObjSet \setminus \MySol$ of size at most $b+1$,
we verify that $X$ by itself is independent, and furthermore, that the
set $Y\subseteq \MySol$ of objects intersecting the objects of $X$, is
of size at most $\cardin{X}-1$. If so, we do $\MySol \leftarrow
(\MySol \setminus Y) \cup X$. Every such exchange increases the size
of $\MySol$ by at least one, and as such it can happen at most $n$
times. Naively, there are $\binom{n}{b+1}$ subsets $X$ to consider,
and for each such subset $X$ it takes $O(n b)$ time to compute $Y$. 
Therefore, the running time is bounded by $O\pth{n^{b+3} }$. (The running
time can be probably improved by being a bit more careful about the
implementation.)

\subsection{Analysis}

We present two alternative ways to analyze this algorithm. The first
approach uses only the fact that the union complexity is low. The
second approach is more direct, and uses the property that the regions
are admissible.

\subsubsection{Analysis using union complexity}

The following lemma by Afshani and Chan~\cite{ac-dcapr-06}, which was
originally intended for different purposes, will turn out to be useful
here (the proof exploits linearity of planar graphs and the Clarkson
technique~\cite{cs-arscg-89}):

\begin{lemma}
    Suppose we have $n$ disjoint simply connected regions in the plane and a collection
    of disjoint curves, where each curve intersects at most $k$
    regions.  Call two curves \emph{equivalent} if they intersect
    precisely the same subset of regions.  Then the number of
    equivalent classes is at most $c_0nk^2$ for some constant $c_0$.

    \lemlab{equivalent}
\end{lemma}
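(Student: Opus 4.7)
The plan is a Clarkson random-sampling argument together with the linearity of planar graphs. First I would pick a single representative curve per equivalence class and write $S_C$ for its set of hit regions, with $\ell_C := \cardin{S_C} \leq k$. Distinct classes correspond to distinct $S_C$, so counting representatives counts classes. Classes with $\ell_C \leq 1$ number at most $n + 1$, which is within the claimed bound, so I focus on the classes with $\ell_C \geq 2$ and denote their count by $m_{\geq 2}$.

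Next I would sample each region independently with probability $p = 1/k$, producing a random subset $R$, and for each class $C$ with $\ell_C \geq 2$ let $E_C$ be the event that exactly two regions of $S_C$ lie in $R$. A direct binomial calculation yields
\[
\Prob{E_C} \;=\; \binom{\ell_C}{2}\,p^2\,(1-p)^{\ell_C - 2} \;=\; \Omega(1/k^2),
\]
uniformly over $2 \leq \ell_C \leq k$, since $(1 - 1/k)^{\ell_C - 2}$ stays bounded away from zero. Whenever $E_C$ holds the representative curve $\gamma_C$ joins its two sampled regions while avoiding every other sampled region, so I can view each triggered class as an edge in a graph $H_R$ on vertex set $R$; disjointness of the curves and of the regions gives a planar (multi)graph embedding of $H_R$.

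The main combinatorial goal is then to prove $\cardin{E(H_R)} = O(k\,\cardin{R})$. By planarity, the underlying simple graph has $O(\cardin{R})$ edges, so it suffices to cap the multi-edge multiplicity between any single pair at $O(k)$. Given sampled regions $A$ and $B$, the representative curves witnessing distinct classes with sampled pair $\brc{A,B}$ form a pairwise-disjoint nested family between $A$ and $B$; a planar face-counting argument on the arrangement of these curves, together with the cap $\ell_C \leq k$ on the size of each hit-set, should limit the multiplicity to $O(k)$ by forcing consecutive curves in the fan to be distinguished by non-sampled regions appearing in the corresponding symmetric differences. With this bound in hand, taking expectations gives
\[
\Omega(1/k^2)\cdot m_{\geq 2} \;\leq\; \Ex{\cardin{E(H_R)}} \;\leq\; O(k)\cdot\Ex{\cardin{R}} \;=\; O(k)\cdot (n/k) \;=\; O(n),
\]
so $m_{\geq 2} = O(nk^2)$ as claimed.

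The hard part will be the $O(k)$ multi-edge bound in $H_R$; the rest is routine Clarkson accounting and the balancing choice $p = 1/k$ that trades the $1/k^2$ lower bound on $\Prob{E_C}$ against the $\Theta(n/k)$ expected size of $R$.
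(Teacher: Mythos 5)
Your high-level plan (Clarkson-type sampling plus planarity of the contracted graph) is the strategy the paper alludes to for this lemma --- the paper itself gives no proof, importing the result from Afshani and Chan --- and your preliminary steps (handling classes with $\ell_C\le 1$, the estimate $\Prob{E_C}=\Omega(1/k^2)$, the planar drawing of $H_R$) are fine. But the step you defer as ``the hard part,'' namely $\cardin{E(H_R)}=O(k\,\cardin{R})$ via an $O(k)$ cap on the multiplicity of any sampled pair, is not just hard: it is false. Take two tall thin regions $A$ and $B$, small regions $D_1,\dots,D_{n-2}$ placed at distinct heights between them, and let $\gamma_i$ be the horizontal segment at height $i$ running from $A$ to $B$ through $D_i$. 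The regions are disjoint and simply connected, the curves are pairwise disjoint, each curve meets exactly $k=3$ regions, and the hit sets $\{A,B,D_i\}$ are pairwise distinct, so these are $n-2$ distinct classes. In the realization $R=\{A,B\}$ (which occurs with positive probability) every one of these classes has exactly two sampled regions, so $H_R$ has two vertices and $n-2$ parallel edges: the multiplicity of a single sampled pair can be $\Theta(n)$, not $O(k)$. The mechanism you invoke --- consecutive curves in the fan being distinguished by non-sampled regions in the symmetric differences --- is exactly what happens in this example, but since there is no limit on the number of non-sampled regions available to do the distinguishing, it gives no cap on the size of the fan; the hypothesis $\ell_C\le k$ restricts each curve individually, not how many curves share a pair.

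Without that pointwise bound the rest of the argument is circular: $\Ex{\cardin{E(H_R)}}=\sum_C\Prob{E_C}$, which is $\Theta\bigl(\sum_C\ell_C^2\bigr)/k^2$, so bounding this expectation by $O(n)$ is, up to constants, precisely the statement being proved, and planarity of a multigraph by itself says nothing about parallel edges. The genuine content of the lemma is to control how many distinct equivalence classes can collapse onto the same pair of sampled regions, and this needs an additional idea (some charging scheme that anchors a class to finer data than the sampled pair its curve happens to join). As written, your proposal does not contain that idea, so the proof has a real gap at its central step.
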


\bigskip 

Let $\OptSol$ be an optimal solution, and let $\MySol$ be a
$b$-locally optimal solution.  We will upper-bound $|\OptSol|$ in
terms of $|\MySol|$.

Let $\OptHigh$ denote the set of objects in $\OptSol$ that
intersect at least $b+1$ objects of $\MySol$.  Let $\OptLow$ be
the set of remaining objects in $\OptSol$.

If $\obj_i\in\OptSol$ intersects $\obj_j\in\MySol$, then the
pair of objects contributes at least two vertices to the boundary of
the union of $\OptSol\cup\MySol$. Indeed, the objects of $\OptSol$
(resp. $\MySol$) are disjoint from each other since this is an
independent set, and no object is contained inside another (by
assumption). 
We remind the reader that for any subset $X \subseteq \ObjSet$, the
union complexity of the regions of $X$ is $\leq \cUnion \cardin{X}$.
As such, the union complexity of $\OptHigh \cup \MySol$ is
$\leq \cUnion\pth[]{\cardin{\OptHigh} + \cardin{\MySol}}$.
Thus,
\begin{align*}
    2(b+1)|\OptHigh|\:&\le\: \cUnion\pth[]{\cardin{\OptHigh} +
   \cardin{\MySol}}\\
\;\;\;\;\Longrightarrow\;\;\;\;
&|\OptHigh|\:\le\:  \frac{\cUnion}{2(b+1)-\cUnion}|\MySol|.
\end{align*}

On the other hand, by applying \lemref{equivalent} with $\MySol$ as
the regions and the boundaries of $\OptLow$ as the curves, the objects
in $\OptLow$ form at most $c_0b^2|\MySol|$ equivalent classes.  Each
equivalent class contains at most $b$ objects: Otherwise we would be
able to remove $b$ objects from $\MySol$ and intersect $b+1$ objects
in this equivalence class to get an independent set larger than
$\MySol$.  This would contradict the $b$-local optimality of $\MySol$.
Thus,
$
\cardin{\OptLow}\:\le\: c_0b^3|\MySol|.
$

Combining the two inequalities, we get
\[
\cardin{\OptSol}\:\le\:
\cardin{\OptLow} + \cardin{\OptHigh} 
\:\le\:
\pth{c_0b^3+\frac{\cUnion}{2(b+1)-\cUnion}} \cardin{ \MySol }.
\]
For example, we can set $b=\ceil{ \cUnion/2}$ and the approximation
factor is $O(\cUnion^3)$.

\begin{theorem}
    Given a set of $n$ unweighted objects in the plane with linear
    union complexity, for a sufficiently large constant~$b$, any
    $b$-locally optimal independent set has size $\Omega(\opt)$, where
    $\opt$ is the size of the maximum independent set of the objects.
\end{theorem}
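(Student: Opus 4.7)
The plan is to show that any $b$-locally optimal independent set $\MySol$ satisfies $|\OptSol| \le C(b) \cdot |\MySol|$ for an optimal independent set $\OptSol$, where $C(b)$ is a constant depending only on $b$ and $\cUnion$. In fact, the derivation immediately preceding the theorem already does essentially all the work: we have the bound
\[
|\OptSol| \:\le\: \pth{c_0 b^3 + \frac{\cUnion}{2(b+1) - \cUnion}} |\MySol|,
\]
so the proof reduces to choosing $b$ large enough (e.g.\ $b = \ceil{\cUnion / 2} + 1$) for the denominator $2(b+1) - \cUnion$ to be positive, and invoking this inequality.

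To verify that the bound holds, I would reassemble the two-part analysis. First, partition $\OptSol = \OptHigh \cup \OptLow$ according to whether an optimal object intersects more or at most $b$ members of $\MySol$. For $\OptHigh$, observe that since $\OptSol$ and $\MySol$ are each independent and no object contains another, each crossing pair $(\obj_i, \obj_j) \in \OptHigh \times \MySol$ contributes at least two vertices to $\partial(\bigcup(\OptHigh \cup \MySol))$; combining $2(b+1)|\OptHigh|$ with the linear union bound $\cUnion(|\OptHigh| + |\MySol|)$ yields $|\OptHigh| \le \tfrac{\cUnion}{2(b+1)-\cUnion} |\MySol|$.

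For $\OptLow$, I would apply \lemref{equivalent} with $\MySol$ as the disjoint regions and the boundaries of objects of $\OptLow$ as the disjoint curves; since each boundary crosses at most $b$ regions, there are at most $c_0 b^2 |\MySol|$ equivalence classes. The key step here, and the only place local optimality is invoked, is to show each class has at most $b$ members: otherwise one could replace the $\le b$ intersected members of $\MySol$ by the $\ge b+1$ like-patterned objects of $\OptLow$ and obtain a strictly larger independent set, contradicting $b$-local optimality.

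The main subtlety, and the step I would scrutinize most carefully, is this swap argument, which relies on the objects of $\OptLow$ in a single equivalence class being pairwise disjoint (inherited from $\OptSol$ being independent) and on them touching \emph{only} the identified $\le b$ members of $\MySol$. Once this is cleanly verified, adding the bounds on $|\OptLow|$ and $|\OptHigh|$ and choosing $b$ sufficiently large (so that $C(b)$ is a finite constant) completes the proof, yielding $|\MySol| = \Omega(\opt)$ as claimed.
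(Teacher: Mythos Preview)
Your proposal is correct and follows essentially the same argument as the paper: the two-part split into $\OptHigh$ and $\OptLow$, the union-complexity count for $\OptHigh$, the application of \lemref{equivalent} together with the local-optimality swap to bound $\OptLow$, and the final choice of $b$ are all exactly what the paper does (the paper sets $b=\ceil{\cUnion/2}$ rather than $\ceil{\cUnion/2}+1$, but either works). Your highlighted subtlety about the swap argument is also the right thing to check, and your justification is sound.
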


\subsubsection{Better analysis for admissible regions}
\seclab{previous}

A set of regions $\ObjSet$ is \emphi{admissible}, if for any two
regions $\obj, \obj' \in \ObjSet$, we have that $\obj
\setminus \obj'$ and $\obj' \setminus \obj$ are both simply
connected (i.e., connected and contains no holes). Note that we do
not care how many times the boundaries of the two regions intersect,
and furthermore, by definition, no region is contained inside another.

\begin{lemma}
    Let $\ObjSet$ be a set of admissible regions, and consider a
    independent set of regions $I \subseteq \ObjSet$, and a region
    $\obj \in \ObjSet \setminus I$. Then, the \emphi{core} region $\ds
    \obj \setminus I = \obj \setminus \mathop{\cup}_{\objA \in I}
    \objA$ is non-empty and simply connected.

    \lemlab{simply}
\end{lemma}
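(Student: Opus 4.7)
The plan is to prove both non-emptiness and simple connectedness of $\obj \setminus I$ by induction on $k = \cardin{I}$. The base case $k = 0$ is immediate, since $\obj \setminus I = \obj$ is a (non-empty) simply connected region.

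For the inductive step, fix any $\objA \in I$ and set $I' = I \setminus \{\objA\}$. By the induction hypothesis, the region $R' := \obj \setminus \bigcup_{\objA' \in I'} \objA'$ is non-empty and simply connected, so it is a topological closed disk whose boundary is a Jordan curve pieced together from arcs of $\partial \obj$ and arcs of the boundaries of the $\objA' \in I'$. The key observation is that admissibility of $\ObjSet$ gives that $\obj \setminus \objA$ is simply connected, and this forces each connected component $K$ of $\obj \cap \objA$ to be a topological closed disk meeting $\partial \obj$ in a single connected arc: a component disjoint from $\partial \obj$ would lie in the interior of $\obj$ and create a hole in $\obj \setminus \objA$, whereas a component meeting $\partial \obj$ in two disjoint arcs would form a Jordan ``bar'' across $\obj$ that disconnects $\obj \setminus \objA$.

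Now, since $I$ is independent, $\objA$ is disjoint from $\bigcup_{\objA' \in I'} \objA'$, so each such component $K \subseteq \objA$ lies entirely inside $R'$, and its touching arc $K \cap \partial \obj$ lies on the portion of $\partial R'$ inherited from $\partial \obj$; the touching arcs for distinct components are pairwise disjoint. At this point a standard topological fact finishes the proof: removing from a topological closed disk finitely many pairwise disjoint sub-disks, each meeting the big disk's boundary along a single arc, yields another (non-empty) topological closed disk (the new Jordan boundary is obtained by substituting each removed boundary arc by the corresponding ``inner'' arc of the sub-disk). Applying this with $R'$ as the big disk and the components of $\obj \cap \objA$ as the sub-disks shows that $\obj \setminus I = R' \setminus \objA$ is simply connected and non-empty, completing the induction.

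The main obstacle I expect is pinning down the topological sub-claim---that each component of $\obj \cap \objA$ is a disk meeting $\partial \obj$ in a single connected arc---fully rigorously; this is intuitively clear but rests on Jordan-curve/Schoenflies-type arguments that must be carried out with some care, especially since admissibility allows $\partial \obj$ and $\partial \objA$ to cross arbitrarily many times, so a single component of $\obj \cap \objA$ could in principle have a complicated shape that one needs to rule out.
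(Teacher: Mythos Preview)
Your inductive argument is correct and substantially more detailed than the paper's own proof. The paper dispatches the lemma in a single sentence: ``It is easy to verify that for the regions of $I$ to split $\obj$ into two connected components, they must intersect, which contradicts their disjointness.'' This addresses only connectedness of the core, leaving non-emptiness and the absence of holes implicit. Your induction on $|I|$, together with the analysis of how the components of $\obj \cap \objA$ sit along $\partial \obj$ and hence along $\partial R'$, fills in precisely the topological details the paper omits. The obstacle you flag---that a component of $\obj \cap \objA$ might a priori be complicated because admissibility allows arbitrarily many boundary crossings---is real but manageable: simple connectedness of both $\obj \setminus \objA$ and $\objA \setminus \obj$ forces each such component to be a disk touching $\partial \obj$ in a single arc, essentially as you argue (a hole in a component would isolate a piece of $\obj \setminus \objA$, and a second touching arc would disconnect $\obj \setminus \objA$). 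So your route is the same in spirit---disjointness of $I$ is what prevents disconnection---but genuinely more careful; what it buys is an actual proof rather than an appeal to intuition.
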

\begin{proof}
    It is easy to verify that for the regions of $I$ to split $\obj$
    into two connected components, they must intersect, which
    contradicts their disjointness.
\end{proof}


    \SaveIndent
    \noindent\begin{minipage}{0.65\linewidth}
        \RestoreIndent
\begin{lemma}    
    Let $X,Y \subseteq \ObjSet$ be two independent sets of
    regions.  Then the intersection graph $G$ of $X\cup Y$ is planar.

    \lemlab{two:sides}
\end{lemma}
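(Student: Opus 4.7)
The plan is to exhibit an explicit planar drawing of $G$. Since $X$ and $Y$ are each independent, $G$ has no edges within $X$ or within $Y$; so it is bipartite with parts $X$ and $Y$, and the geometry of the two families will supply a natural planar embedding. The main tool will be repeated application of \lemref{simply}, which produces a simply connected \emph{core} from any region by deleting an independent family.

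First I would place representative points. For each $\obj \in X$, \lemref{simply} with $I = Y$ gives that $C_\obj = \obj \setminus \bigcup Y$ is nonempty and simply connected; pick $p_\obj \in C_\obj$. Symmetrically pick $q_\objB \in D_\objB = \objB \setminus \bigcup X$ for each $\objB \in Y$. For each edge $\obj\objB$ of $G$, fix a meeting point $m_{\obj\objB} \in \obj \cap \objB$ (nonempty, since the edge exists). I would draw this edge as a concatenation of two arcs: applying \lemref{simply} once more with the still-independent set $I = Y \setminus \{\objB\}$, the region $R_{\obj,\objB} = \obj \setminus \bigcup(Y \setminus \{\objB\})$ is simply connected, contains $p_\obj$ (since $C_\obj \subseteq R_{\obj,\objB}$), and contains $m_{\obj\objB}$ (disjointness of $Y$ forces $m_{\obj\objB} \notin \objB'$ for any $\objB' \in Y \setminus \{\objB\}$); so we may take an arc $\alpha_{\obj\objB} \subseteq R_{\obj,\objB}$ from $p_\obj$ to $m_{\obj\objB}$, and symmetrically $\beta_{\obj\objB} \subseteq S_{\obj,\objB} := \objB \setminus \bigcup(X \setminus \{\obj\})$ from $m_{\obj\objB}$ to $q_\objB$; the edge is $\alpha_{\obj\objB} \cup \beta_{\obj\objB}$.

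The planarity check is easy for edges with disjoint endpoints: for $e = \obj\objB$ and $e' = \obj'\objB'$ with $\obj \neq \obj'$ and $\objB \neq \objB'$, the arc $\alpha_e \subseteq R_{\obj,\objB} \subseteq \obj$ avoids $\obj'$ (by independence of $X$) and also avoids $\objB'$ (since $\objB' \in Y \setminus \{\objB\}$), so $\alpha_e$ cannot meet $\alpha_{e'} \cup \beta_{e'}$, and symmetric reasoning handles $\beta_e$. The main obstacle is the coordination of curves sharing an endpoint: say multiple arcs $\alpha_{\obj\objB_1}, \alpha_{\obj\objB_2}, \ldots$ all emanating from the same $p_\obj$. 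They all live inside $\obj$ and share the common hub $C_\obj$, but each is allowed to visit only its own ``island'' $\obj \cap \objB_i$, and these islands are pairwise disjoint by independence of $Y$. I would handle this by routing the portion inside $C_\obj$ as a planar star from $p_\obj$ to chosen portal points on $\partial C_\obj \cap \partial \objB_i$ (possible since $C_\obj$ is simply connected), and then extending each branch from its portal to $m_{\obj\objB_i}$ inside the corresponding island; the $\beta$-stars at each $q_\objB$ are coordinated symmetrically, and the two halves of each edge meet cleanly at the interior point $m_{\obj\objB}$.
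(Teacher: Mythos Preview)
Your proof is correct and takes essentially the same approach as the paper: build an explicit planar drawing by placing each vertex in the simply connected core guaranteed by \lemref{simply}, and route the edges as disjoint stars out of those cores. The paper's version is slightly less symmetric (only the $X$-curves are confined to cores; the $Y$-curves live in all of $\objA$, and the two halves meet on $\partial\objA$ rather than at an interior point of $\obj\cap\objA$), but the idea is identical, and your explicit coordination of the stars at each hub fills in precisely what the paper's ``clearly'' glosses over.
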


\noindent
\emph{Proof:} \lemref{simply} implies the planarity of this graph.

Indeed, for a region $\obj \in X$, the core $\obj' = \ds \obj
\setminus \mathop{\cup}_{\objA \in Y} \objA$ is non-empty and simply
connected. Place a vertex $v_\obj$ inside this region, and for every
object $\objA \in Y$ that intersects $\obj$, create a curve from
$v_\obj$ to a point $p_{\obj,\objA}$ on the boundary of $\objA$ that
lies inside $\obj$. Clearly, we can create these curves in such a way
that they do not intersect each other.  

    \end{minipage}
    \begin{minipage}{0.34\linewidth}
        
    \includegraphics{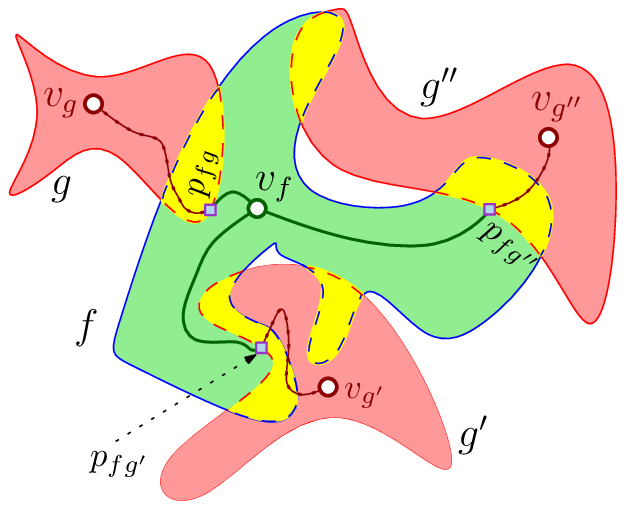}
    \end{minipage}

    Similarly, for every region $\objA \in Y$, we place a vertex
    $v_\objA$ inside $\objA$, and connect it to all the points
    $p_{\obj,\objA}$ placed on its boundary, by curves that are
    contained in $\objA$, and they are interior disjoint. Clearly,
    together, these vertices and curves form a planar drawing of
    $G$.
\hfill \qedsign

\bigskip

We need the following version of the planar separator theorem.  Below,
for a set of vertices $U$ in a graph $G$, let $\Nbr{U}$ denote the set
of neighbors of $U$, and let $\NbrZ{U} = \Nbr{U} \cup U$.

\begin{lemma}[\cite{f-faspp-87}]
    There are constants $\constA$, $\constB$ and $\constC$, such that
    for any planar graph $G=(V,E)$ with $n$ vertices, and a parameter
    $r$, one can find a set of $X \subseteq V$ of size at most
    $\constA n/\sqrt{r}$, and a partition of $V \setminus X$ into
    $n/r$ sets $V_1, \ldots, V_{n/r}$, satisfying: (i) $\cardin{V_i}
    \leq \constB r$, (ii) $\Nbr{V_i} \cap V_j = \emptyset$, for $i \ne
    j$, and (iii) $\cardin{\Nbr{V_i}\cap X} \le \constC \sqrt{r} $.

    \lemlab{separators}
\end{lemma}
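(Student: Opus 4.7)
The plan is to derive \lemref{separators} by recursively applying the classical Lipton--Tarjan planar separator theorem, followed by a local refinement step to enforce the per-piece boundary bound. Recall that Lipton--Tarjan guarantees, for any planar graph on $m$ vertices, a vertex separator of size $O(\sqrt{m})$ whose removal leaves components each of size at most $2m/3$.

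First I would build a decomposition tree. Starting from $G$, apply Lipton--Tarjan to extract a separator $S_0$ of size $O(\sqrt{n})$, add $S_0$ to $X$, and recurse on each of the components of $V \setminus S_0$. I stop recursing on a subgraph once its vertex count drops below $\constB r$ for an appropriate constant $\constB$; the subgraphs at the resulting leaves are declared to be the pieces $V_1, \ldots, V_{n/r}$. By construction, each $V_i$ has at most $\constB r$ vertices, which gives property~(i), and distinct pieces are separated in $G$ by vertices placed in $X$, which gives property~(ii).

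Second, I would bound $\cardin{X}$ by the recurrence $T(m) = T(m_1) + T(m_2) + O(\sqrt{m})$ with $m_1, m_2 \le 2m/3$. Stopping when $m \le r$, the recurrence unwinds to $T(n) = O(n/\sqrt{r})$, yielding the claimed bound $\cardin{X} \le \constA n/\sqrt{r}$. The count of at most $n/r$ pieces follows by a similar accounting (each leaf accounts for $\Theta(r)$ vertices).

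The main obstacle is property~(iii), the pointwise bound $\cardin{\Nbr{V_i}\cap X} \le \constC \sqrt{r}$. The raw recursion only yields the average estimate $\sum_i \cardin{\Nbr{V_i}\cap X} = O(n/\sqrt{r})$, since a single piece can, in principle, inherit boundary from separators at every level of the recursion above it. To address this, I would follow Frederickson's refinement idea: any leaf piece $V_i$ whose boundary exceeds $\constC \sqrt{r}$ is split further by a Lipton--Tarjan separator applied internally to the subgraph induced by $V_i$ together with its attached boundary vertices. Each such refinement either reduces the boundary of an offending piece or shrinks the piece itself, and a charging argument confirms that the additional separator mass remains $O(n/\sqrt{r})$. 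Iterating until no piece violates the bound secures all three properties simultaneously, with appropriate choices of the constants $\constA$, $\constB$, $\constC$.
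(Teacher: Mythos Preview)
The paper does not actually prove this lemma; it is quoted verbatim from Frederickson~\cite{f-faspp-87} with no argument given. Your sketch is a faithful outline of Frederickson's original proof: recursive Lipton--Tarjan until pieces have size $O(r)$, bound $|X|$ by the recurrence $T(m)=T(m_1)+T(m_2)+O(\sqrt{m})$, and then repair pieces whose boundary exceeds $\constC\sqrt{r}$ by further internal separation. So there is nothing to compare against in this paper, and your approach is the standard one.

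One minor observation worth making: when the authors later need an analogue of this lemma for fat objects (\lemref{s:w:2}), they \emph{do} supply a proof, but they explicitly avoid the refinement step you describe. Their recursion only yields the aggregate bound $\sum_i |\Nbr{V_i}\cap X|=O(n/r^{1/d})$ together with the trivial per-piece bound $|\Nbr{V_i}\cap X|\le \constD r$, and they remark that these ``weaker properties'' still suffice for the application. This confirms your diagnosis that property~(iii) is the nontrivial part and requires the extra Frederickson refinement; the paper simply outsources that work to the citation rather than reproducing it.
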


Let \OptSol be the optimal solution
and \MySol be a $b$-locally optimal solution. 
Consider the bipartite intersection graph $G$
of $\OptSol \cup \MySol$.
By \lemref{two:sides}, we can apply \lemref{separators} to $G$, for
$r=b/(c_2+c_3)$.  Note that $\cardin{\NbrZ{V_i}} \leq c_2r +
c_3\sqrt{r} < b$ for each $i$.  Let
\begin{align*}
    \szOpt_i &= \cardin{V_i \cap \OptSol},
    \\
    \szMy_i &= \cardin{V_i \cap \MySol},
    \\
    & \text{ and } \;\;\; b_i = \cardin{\Nbr{V_i} \cap X}, \;\;
    \text{for each } i.
\end{align*}
Observe that $\szMy_i + b_i \geq \szOpt_i$, for all $i$. Indeed,
otherwise, we can throw away the vertices of $\MySol \cap \NbrZ{V_i}$
from $\MySol$, and replace them by $V_i \cap \OptSol$, resulting in a
better solution. This would contradict the local optimality of
$\MySol$.  Thus,
\begin{align*}
    \cardin{\OptSol}& \le 
    \sum_i \szOpt_i \:+\: |X| 
    \le
    \sum_i \szMy_i \:+\: \sum_i b_i \:+\: |X|\\
    & \le
    |\MySol| + c_3\sqrt{r}\cdot\frac{|\OptSol|+|\MySol|}{r}
    + c_1\frac{|\OptSol|+|\MySol|}{\sqrt{r}}\\
    &\le  |\MySol| + (c_1+c_3)\frac{|\OptSol|+|\MySol|}{\sqrt{r}}.
\end{align*}
It follows that $|\OptSol|\le (1+O(1/\sqrt{b}))|\MySol|$.  We can set
$b$ to the order of $1/\eps^2$, and we get the following.

\begin{theorem}
    Given a set of $n$ unweighted admissible regions in the plane,
    any $b$-locally optimal independent set has size 
    $\geq (1-O(1/\sqrt{b}))\opt$, where
    $\opt$ is the size of the maximum independent set of the objects.
    In particular, one can compute an independent set of size
    $\geq (1-\eps)\opt$, in time $n^{O(1/\eps^2)}$. 
\end{theorem}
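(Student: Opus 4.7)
The plan is to compare the optimal solution $\OptSol$ with any $b$-locally optimal solution $\MySol$ via a divide-and-conquer argument on their bipartite intersection graph $G$. By \lemref{two:sides}, $G$ is planar, so the planar separator theorem (\lemref{separators}) applies. I would invoke it with $r = \Theta(b)$, specifically $r = b/(\constB+\constC)$, so that for every cluster $V_i$ the neighborhood size satisfies $|\NbrZ{V_i}| \leq \constB r + \constC\sqrt{r} < b$. This is exactly the regime where local optimality of $\MySol$ can be used to veto any swap restricted to a single cluster plus its separator boundary.

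The core of the argument is a local-exchange inequality on each cluster. Let $\szOpt_i = |V_i\cap\OptSol|$, $\szMy_i = |V_i\cap\MySol|$, and $b_i = |\Nbr{V_i}\cap X|$. I claim $\szOpt_i \leq \szMy_i + b_i$; otherwise, one could delete the at most $\szMy_i+b_i$ vertices of $\MySol\cap\NbrZ{V_i}$ from $\MySol$ and insert the $\szOpt_i$ vertices of $V_i\cap\OptSol$. The inserted set is independent (as a subset of $\OptSol$), and the separator property $\Nbr{V_i}\cap V_j=\emptyset$ for $j\neq i$ ensures that no new conflicts are created with $\MySol\setminus\NbrZ{V_i}$. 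Since the total number of deletions and insertions stays within the budget $b$, this swap would contradict the $b$-local optimality of $\MySol$.

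Summing $\szOpt_i \leq \szMy_i + b_i$ over the $n/r$ clusters and adding back the separator $|X|$, the quantitative bounds $\sum_i b_i \leq \constC\sqrt{r}\cdot|V|/r$ and $|X|\leq\constA |V|/\sqrt{r}$ from \lemref{separators} (applied to $|V| = |\OptSol|+|\MySol|$) yield
\[
|\OptSol| \;\leq\; |\MySol| + O\!\pth{\frac{|\OptSol|+|\MySol|}{\sqrt{r}}}.
\]
Rearranging, and using $r = \Theta(b)$, gives $|\OptSol| \leq (1+O(1/\sqrt{b}))|\MySol|$. For the algorithmic statement, I would choose $b = \Theta(1/\eps^2)$ and invoke the naive local-search procedure described earlier: it performs at most $n$ improvement steps, each examining $O(n^{b+1})$ candidate swaps at a cost of $O(nb)$ per check, for total running time $n^{O(1/\eps^2)}$.

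The main obstacle I anticipate is the calibration of the separator parameter $r$. The argument needs $|\NbrZ{V_i}| < b$ simultaneously for all $i$ so that the contradicting swap fits within the local-search budget, and at the same time it needs the total separator and boundary cost $(c_1+c_3)|V|/\sqrt{r}$ to shrink as $b$ grows. Both requirements pull on the same constant, and the cleanest fit is the choice $r = b/(\constB+\constC)$ above; once this is set, everything else is routine arithmetic driven by \lemref{separators}.
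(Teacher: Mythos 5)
Your proposal is correct and is essentially identical to the paper's own argument: planarity via \lemref{two:sides}, the multi-cluster separator \lemref{separators} with $r=b/(\constB+\constC)$, the per-cluster exchange inequality $\szOpt_i\le\szMy_i+b_i$ justified by local optimality, and the same summation yielding $|\OptSol|\le(1+O(1/\sqrt{b}))|\MySol|$ with $b=\Theta(1/\eps^2)$. No gaps to report.
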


\subsubsection{Analysis for fat objects in any fixed dimension}

We show that the same algorithm gives a \PTAS for the case when
the objects in $\ObjSet$ are fat.  This result in fact holds 
in any fixed dimension~$d$.  For our purposes, we use the following
definition of fatness: the objects in $\ObjSet$ are \emph{fat} if
for every axis-aligned hypercube $B$ of side length $r$, we can find a 
constant number $c$ of points
such that every object that intersects $B$ and has diameter at least $r$
contains one of the chosen points.

Smith and Wormald \cite{sw-gsta-98} proved a family of geometric
separator theorems, one version of which will be useful for us and is
stated below (see also~\cite{c-ptasp-03}):

\begin{lemma}[\cite{sw-gsta-98}]
    Given a collection of $n$ fat objects in a fixed dimension $d$
    with constant maximum depth, there exists an axis-aligned
    hypercube $B$ such that at most $2n/3$ objects are inside $B$, at
    most $2n/3$ objects are outside $B$, and at most $O(n^{1-1/d})$
    objects intersect the boundary of $B$.
  
    \lemlab{s:w}
\end{lemma}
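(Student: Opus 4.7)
The plan is to combine a ``total diameter'' bound coming from fatness and bounded depth with a centerpoint-based balance argument and an averaging step over hypercube side lengths. After rescaling so that all objects lie in $[0,1]^d$, I would first prove that the sum of the objects' diameters satisfies $\sum_i \delta_i = O(n^{1-1/d})$. This comes from a dyadic scale decomposition: at each scale $r = 2^{-k}$, the piercing form of fatness stated in the paper, together with the constant-depth assumption, implies that only $O(1)$ objects of diameter at least $r$ can intersect any axis-aligned cube of side $r$. Covering $[0,1]^d$ by $O(r^{-d})$ such cubes bounds the number of objects of diameter in $[r,2r]$ by $O(r^{-d})$, and summing their contribution $O(r\cdot r^{-d}) = O(r^{-(d-1)})$ across all relevant scales (the largest of which satisfies $r^{-d}\sim n$) gives the claimed bound on $\sum_i \delta_i$.

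Next, I would locate the hypercube. Applying a centerpoint-type argument to (say) the centroids of the objects, one obtains a point $c$ such that every axis-aligned halfspace through $c$ contains at most $2n/3$ centroids. Varying the side length $s$ of a hypercube $B(c,s)$ centered at $c$, the count of objects strictly inside rises monotonically from $0$ towards $n$, so there is a feasible interval $[s_1,s_2]$ on which both balance conditions --- at most $2n/3$ inside and at most $2n/3$ outside --- hold simultaneously. For $s$ in that interval, an object of diameter $\delta_i$ can be cut by the boundary of $B(c,s)$ only for an $s$-subinterval of total length $O(\delta_i)$. Integrating over $s\in[s_1,s_2]$ and summing over all objects, the total measure of boundary-crossing incidences is
\[
\sum_i O(\delta_i) \;=\; O\!\left(n^{1-1/d}\right),
\]
and pigeonhole yields a specific $s^\ast$ for which $B(c,s^\ast)$ has at most $O(n^{1-1/d})$ objects crossing its boundary, delivering all three conclusions of the lemma.

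The main obstacle is making sure the feasible interval $[s_1,s_2]$ is actually wide enough, relative to the crossing bound, for the pigeonhole step to yield a nontrivial $s^\ast$. I would handle this by augmenting the single side-length parameter with a random axis-aligned translation of the center over a small window: averaging jointly over the translation and over $s\in[s_1,s_2]$ still produces expected boundary crossings $O(n^{1-1/d})$, while the balance conditions continue to hold on all but a small fraction of the resulting hypercubes. A secondary technical point is that the paper's fatness definition is of piercing/covering form rather than a direct inradius bound --- but this is precisely the form that makes the dyadic-scale counting in the total-diameter step clean, so no extra assumption is required.
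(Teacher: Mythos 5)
Your proposal has a genuine gap, and it is exactly at the point you flag and then try to patch: the feasible interval $[s_1,s_2]$ of balanced side lengths for cubes centered at your point $c$ can be far too short for the averaging step, and neither the centerpoint choice of $c$ nor a small random translation of the center repairs this. (The first half of your argument is fine: with the paper's piercing form of fatness plus constant depth, a dyadic grid argument does give, after rescaling to $[0,1]^d$, at most $O(\min(r^{-d},n))$ objects of diameter about $r$, hence $\sum_i\delta_i=O(n^{1-1/d})$, and a fixed object meets $\partial B(c,s)$ only for an $s$-set of measure $O(\delta_i)$.) For a concrete failure, let $Q$ be an axis-aligned cube of side $1/2$ centered at the origin and tile $\partial Q$ with $n=\Theta(\eps^{-(d-1)})$ disjoint fat objects of diameter $\eps$. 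The coordinate-wise median (indeed any centerpoint) of the centroids is the origin, but for cubes centered at or very near the origin the transition from ``all objects outside'' to ``all objects inside'' happens over an $s$-window of width $O(\eps)=O(n^{-1/(d-1)})$, throughout which $\Theta(n)$ objects cross the boundary; outside that window balance fails. So $s_2-s_1=O(n^{-1/(d-1)})$ and the averaged bound $\sum_i O(\delta_i)/(s_2-s_1)$ exceeds $n$, i.e., is vacuous. Balanced cubes with $O(n^{1-1/d})$ (in fact $O(n^{(d-2)/(d-1)})$) crossings do exist for this input, but their centers lie at distance $\Omega(1)$ from the origin (e.g., near a face or a corner of $Q$), so averaging over a ``small window'' of translations cannot reach them: the flaw is structural, namely anchoring the center independently of the scale.

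The paper itself does not prove this lemma (it cites Smith and Wormald \cite{sw-gsta-98}; see also \cite{c-ptasp-03}), and the standard proof fixes precisely this issue by choosing location and scale together so that the averaging window is guaranteed to be proportional to the relevant scale. Take $B_0$ to be a smallest axis-aligned hypercube containing at least $n/c_d$ of the objects' center points (for a suitable constant $c_d$), and dilate it about its own center by a random factor $t\in[1,2]$. Minimality makes every dilate balanced: $2B_0$ can be covered by $O(1)$ cubes slightly smaller than $B_0$, each containing fewer than $n/c_d$ centers, so at most a constant fraction of the objects lie inside $tB_0$, while $B_0\subseteq tB_0$ already accounts for $n/c_d$ objects not outside. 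The crossing analysis is then done locally at scale $\mathrm{side}(B_0)$: by fatness and constant depth only $O(1)$ objects of diameter at least $\mathrm{side}(B_0)$ meet $2B_0$, a smaller object of diameter $\delta_i$ is cut by $\partial(tB_0)$ with probability $O(\delta_i/\mathrm{side}(B_0))$, and your dyadic packing bound applied at scales relative to $\mathrm{side}(B_0)$ gives expected crossings $O(n^{1-1/d})$. (This route yields a dimension-dependent balance constant rather than $2/3$, which is harmless for the application or can be improved afterwards.) If you want to salvage your framework, you must replace ``cubes centered at a centerpoint, parameterized by side length'' with a family in which the balanced subfamily provably has measure comparable to the cube size --- which is exactly what the minimal-cube-plus-random-dilation trick provides.
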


We need the following extension of Smith and Wormald's
separator theorem to multiple clusters
(whose proof is similar to the extension of the
standard planar separator theorem in \cite{f-faspp-87}):

\begin{lemma}
    There are constants $\constA$, $\constB$, $\constC$ and $\constD$,
    such that for any intersection graph $G=(V,E)$ of $n$ fat objects
    in a fixed dimension $d$ with constant maximum depth, and a
    parameter $r$, one can find a set of $X \subseteq V$ of size at
    most $\constA n/r^{1/d}$, and a partition of $V \setminus X$ into
    $n/r$ sets $V_1, \ldots, V_{n/r}$, satisfying: (i) $\cardin{V_i}
    \leq \constB r$, (ii) $\Nbr{V_i} \cap V_j = \emptyset$, for $i \ne
    j$, and (iii) $\sum_i \cardin{\Nbr{V_i}\cap X} \le \constC
    n/r^{1/d}$, and (iv) $\cardin{\Nbr{V_i}\cap X} \le \constD r$.

    \lemlab{s:w:2}
\end{lemma}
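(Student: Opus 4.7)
The plan is to mimic Frederickson's multi-cluster extension of the Lipton--Tarjan separator theorem, but using \lemref{s:w} in place of the planar separator theorem. Recursively apply \lemref{s:w} to the intersection graph. At the top level, find a hypercube $B$ that partitions the objects into those strictly inside $B$, those strictly outside $B$, and those crossing $\partial B$; by \lemref{s:w} the crossing set has size $O(n^{1-1/d})$ and the two other sets each have size at most $2n/3$. Place the crossing objects into $X$ and recurse on the inside and outside subproblems. (Both inherit the constant maximum-depth property, so \lemref{s:w} still applies.) Stop recursing on any subproblem once it contains at most $\constB r$ objects, and declare each terminated subproblem to be one of the clusters $V_1, \ldots, V_{n/r}$.

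Conditions (i) and (ii) are immediate from the construction: (i) holds because we stop at size $\constB r$, and (ii) holds because any edge between two different clusters would have to cross some ancestor hypercube boundary, but every such boundary-crossing object was placed in $X$. To bound $|X|$, sum separator sizes across all recursion levels. Setting up the recurrence $T(n) \le 2T(2n/3) + O(n^{1-1/d})$ with base case $T(n)=0$ for $n \le r$, and analyzing the recursion tree by charging each leaf an $r^{1-1/d}$ contribution and using concavity of $x^{1-1/d}$ on the internal levels, yields $|X| \le \constA n/r^{1/d}$.

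For condition (iv), note that a cluster $V_i$ was finalized inside some bounding hypercube $B_i$ with $|V_i| \le \constB r$. Any $x \in \Nbr{V_i}\cap X$ must actually intersect one of these $O(r)$ objects; combining fatness with the constant depth bound on the local arrangement shows that only $O(r)$ such $X$-objects can reach $V_i$, giving $|\Nbr{V_i}\cap X| \le \constD r$. For condition (iii), interchange summation:
\[
\sum_i |\Nbr{V_i}\cap X|
\;=\; \sum_{x \in X}\bigl|\{i : x \in \Nbr{V_i}\}\bigr|,
\]
and argue that each $x \in X$ is adjacent to objects in only $O(1)$ leaf clusters. Indeed, $x$ was introduced when it crossed some hypercube $B_j$, so each $V_i$ neighboring $x$ must live in a subproblem descending from one of the two immediate children of $B_j$; a geometric charging then shows that within the bounded region defined by $B_j$, only $O(1)$ leaf clusters contain an object intersecting $x$. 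The sum is therefore $O(|X|) = O(n/r^{1/d})$, giving the $\constC$ bound.

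The main obstacle will be making the $O(1)$-charge argument for condition (iii) rigorous. Fatness alone does not cap the degree of a single object in the intersection graph, so one cannot just say that $x$ has $O(1)$ neighbors and conclude. The argument must genuinely exploit both fatness and the constant maximum depth to show that the \emph{leaf clusters} meeting $x$ are few, not merely the objects meeting $x$. This is the step most analogous to the per-level separator analysis in Frederickson's planar proof, but adapted to a purely geometric packing setup rather than a topological one.
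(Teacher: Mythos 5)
Your overall plan (recursing with \lemref{s:w}, Frederickson-style) is the right one, and your treatment of (i), (ii) and of $|X|$ is essentially sound, but the way you handle the boundary objects --- discarding $\widehat{S}$ into $X$ and recursing only on the two strict sides --- leaves both (iii) and (iv) genuinely unproven, and the geometric claims you invoke to close them are false for fat objects of constant depth. For (iv) you assert that only $O(r)$ objects of $X$ can meet the at most $\constB r$ objects of a leaf cluster; but fatness plus constant depth does not bound the number of objects meeting even a single object: arbitrarily many tiny, pairwise-disjoint fat objects can touch one large object without the depth exceeding $2$, and such tiny objects can all lie in $X$ (each crossed some ancestor separator) while the large object survives unmarked into a leaf. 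The only bound your construction actually yields for $\cardin{\Nbr{V_i}\cap X}$ is the total size of the separators along the root-to-leaf path, which is $O(n^{1-1/d})$ and exceeds $\constD r$ when $r$ is small. Symmetrically, your (iii) rests on the claim that each $x\in X$ neighbors objects in only $O(1)$ leaf clusters, which fails for the same reason (one large separator object can touch small objects scattered over many clusters); you flag this obstacle yourself, but no packing argument of the kind you sketch can overcome it, because the obstruction is not a packing phenomenon at all. A smaller point: the recurrence $T(n)\le 2T(2n/3)+O(n^{1-1/d})$ as literally written solves to a superlinear bound; it must be stated with the side condition $n_1+n_2\le n$, $n_1,n_2\le 2n/3$.

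The missing idea, and the one the paper uses, is to \emph{not} remove the boundary objects from the recursion: mark the objects of $\widehat{S}$ and recurse on $S'\cup\widehat{S}$ and on $S''\cup\widehat{S}$, let $X$ be all objects marked at least once, and let $V_i$ be the \emph{unmarked} objects of the leaf subset $S_i$. Since two intersecting objects can never lie strictly inside and strictly outside the same hypercube, any object intersecting an unmarked object is carried along with it at every level, hence $\Nbr{V_i}\subseteq S_i$. This makes (ii) and (iv) immediate, with $\constD=\constB$ and no geometry beyond the separation observation, and it turns (iii) into pure counting: $\sum_i\cardin{\Nbr{V_i}\cap X}\le\sum_{x\in X}\#\brc{\text{leaves containing }x}=\sum_{x\in X}\pth{1+\#\text{marks}(x)}$, and the total number of marks obeys the recurrence $T(n)\le T\pth{n_1+O(n^{1-1/d})}+T\pth{n_2+O(n^{1-1/d})}+O(n^{1-1/d})$ with $n_1,n_2\le 2n/3$ and $n_1+n_2\le n$, whose solution is $O(n/r^{1/d})$; the same bound serves for $|X|$. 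So the fix is a bookkeeping device in the recursion, not a sharper geometric charging argument.
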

\begin{proof}
    Assume that all objects are unmarked initially.  We describe a
    recursive procedure for a given set $S$ of objects.  If $|S|\le
    \constB r$, then $S$ is a ``leaf'' subset and we stop the
    recursion.  Otherwise, we apply \lemref{s:w}.  Let $S'$ and $S''$
    be the subset of all objects inside and outside the separator
    hypercube $B$ respectively.  Let $\widehat{S}$ be the subset of
    all objects intersecting the boundary of $B$.  We mark the objects
    in $\widehat{S}$ and recursively run the procedure for the subset
    $S'\cup \widehat{S}$ and for the subset $S''\cup \widehat{S}$.

    Note that some objects may be marked more than once.  Let $X$ be
    the set of all objects that have been marked at least once.  For
    each leaf subset $S_i$, generate a subset $V_i$ of all unmarked
    objects in $S_i$.  Property (i) is obvious.  Properties (ii) and
    (iv) hold, because the unmarked objects in each leaf subset $S_i$
    can only intersect objects within $S_i$ and cannot intersect
    unmarked objects in other $S_j$'s.

    The total number of marks satisfies the recurrence $T(n) = 0$ if
    $n\le \constB r$, and
    \[ 
        T(n) \le \max_{\substack{n_1,n_2\le 2n/3\\n_1+n_2\le n}}
        T(n_1 + O(n^{1-1/d})) 
        + T(n_2 + O(n^{1-1/d})) + O(n^{1-1/d})
    \] 
    otherwise.
    The solution is $T(n) = O(n/r^{1/d})$.  Thus, we have
    $|X|=O(n/r^{1/d})$.  Furthermore, for each object $\obj\in X$, the
    number of leaf subsets that $\obj$ is in is equal to 1 plus the
    number of marks that $\obj$ receives.  Thus, (iii) follows.
\end{proof}

Let \OptSol be the optimal solution and \MySol be a $b$-locally
optimal solution.  Consider the bipartite intersection graph $G$ of
$\OptSol \cup \MySol$, which has maximum depth~2.  We proceed as in
the proof from \secref{previous}, using \lemref{s:w:2} instead of
\lemref{separators}.  Note that (iii)--(iv) are weaker properties but
are sufficient for the same proof to go through.  The only differences
are that square roots are replaced by $d$\th roots, and we now set
$r=b/(c_2+c_4)$, so that $\cardin{\NbrZ{V_i}} \leq c_2r + c_4r < b$.
We conclude:

\begin{theorem}
    Given a set of $n$ fat objects in a fixed dimension~$d$,
    any $b$-locally optimal independent set has size 
    $\geq (1-O(1/b^{1/d}))\opt$, where
    $\opt$ is the size of the maximum independent set of the objects.
    In particular, one can compute an independent set of size
    $\geq (1-\eps)\opt$, in time $n^{O(1/\eps^d)}$. 
\end{theorem}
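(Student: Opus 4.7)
The plan is to run the same argument as in Section~\ref{sec:previous}, replacing the planar separator theorem (\lemref{separators}) by its fat-object counterpart (\lemref{s:w:2}). Let \OptSol be the optimal independent set and \MySol be a $b$-locally optimal one, and consider the bipartite intersection graph $G$ of $\OptSol\cup\MySol$. Since both \OptSol and \MySol are independent sets, the depth of this collection of objects is at most~$2$, which is the crucial ``constant maximum depth'' hypothesis needed to invoke \lemref{s:w:2}.

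I would set $r = b/(\constB+\constD)$ and apply \lemref{s:w:2} to $G$, obtaining a separator $X$ and a partition $V_1,\ldots,V_{n/r}$ of $V\setminus X$. By properties (i) and (iv), $\cardin{\NbrZ{V_i}}\leq \constB r + \constD r < b$. Writing $\szOpt_i = \cardin{V_i\cap\OptSol}$, $\szMy_i = \cardin{V_i\cap\MySol}$, and $b_i = \cardin{\Nbr{V_i}\cap X}$, the local-optimality argument from \secref{previous} goes through verbatim: otherwise we could replace $\MySol\cap \NbrZ{V_i}$ (at most $b$ objects) by $V_i\cap \OptSol$ (a strictly larger number of objects, which remain independent by property (ii) since they are confined to $V_i$) to obtain an independent set larger than \MySol, contradicting $b$-local optimality. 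Thus $\szMy_i + b_i \geq \szOpt_i$ for every $i$.

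Summing and using properties (i) and (iii) of \lemref{s:w:2}, together with $|X|\leq \constA n/r^{1/d}$ applied to $n = \cardin{\OptSol}+\cardin{\MySol}$, yields
\[
\cardin{\OptSol}
\;\leq\; \sum_i \szOpt_i + |X|
\;\leq\; \cardin{\MySol} + \sum_i b_i + |X|
\;\leq\; \cardin{\MySol} + (\constA+\constC)\,\frac{\cardin{\OptSol}+\cardin{\MySol}}{r^{1/d}}.
\]
Rearranging gives $\cardin{\OptSol}\leq (1+O(1/b^{1/d}))\cardin{\MySol}$, and choosing $b = \Theta(1/\eps^d)$ yields the $(1-\eps)$-approximation; the running-time bound $n^{O(1/\eps^d)}$ comes from the bound $n^{O(b)}$ on the local-search algorithm's running time as in the unweighted analysis.

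The main obstacle is the subtle use of the separator: because $\sum_i b_i$ appears rather than $\max_i b_i$ times the number of parts, we genuinely need the stronger property (iii) of \lemref{s:w:2} (this is why simply plugging in a per-cluster bound is not enough), while property (iv) is what lets us certify that the swap remains within the $b$-local exchange radius. One must also double-check that, after throwing in the separator $X$ on the \MySol side, the swap $V_i\cap \OptSol$ is independent from the remaining \MySol objects --- but this follows immediately from property (ii), since any edge from $V_i$ to another $V_j$ would have to pass through $\Nbr{V_i}\cap X$, which has been removed.
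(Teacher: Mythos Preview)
Your proposal is correct and follows essentially the same approach as the paper: apply \lemref{s:w:2} to the depth-$2$ intersection graph of $\OptSol\cup\MySol$ with $r=b/(\constB+\constD)$, use (i) and (iv) to certify that each swap touches fewer than $b$ objects, and use (iii) together with the bound on $|X|$ to sum the $b_i$'s and obtain $|\OptSol|\le |\MySol|+(\constA+\constC)(|\OptSol|+|\MySol|)/r^{1/d}$. Your observation that (iii) is genuinely needed (a per-cluster bound would not suffice) and that (iv) is what keeps the swap within the local-exchange radius is exactly the point the paper makes when it notes that ``(iii)--(iv) are weaker properties but are sufficient for the same proof to go through.''
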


\section{Approximation by \LP Relaxation: Weighted Case}

\subsection{The algorithm}\seclab{alg}

We are interested in computing a maximum-weight independent set of the
objects in $\ObjSet=\brc{\obj_1,\ldots, \obj_n}$, with weights
$w_1,\ldots, w_n$, respectively.  To this end, let us solve the
following \LP relaxation:
\begin{align}    
    \max \;\; &\sum_{i=1}^n w_i x_i
    \eqlab{l:p:indep} \\
    \nonumber
    &
    \sum_{\obj_i\ni\pnt} x_i \leq 1 &  \forall \pnt \in \VertX{\ObjSet}\\
    &0 \leq x_i \leq 1,
    \nonumber
\end{align}
where $\VertX{\ObjSet}$ denotes the set of vertices of the arrangement
$\ArrX{\ObjSet}$.

In the following, $x_i$ will refer to the value assigned to this
variable by the solution of the \LP.  Similarly, $\Opt = \sum_{i} w_i
x_i$ will denote the weight of the relaxed optimal solution, which is
at least the weight $\opt$ of the optimal integral solution.

We will assume, for the time being, that no two objects of $\ObjSet$
fully contain each other.

For every object $\obj_i$, let its \emphi{resistance} be the total sum
of the values of the objects that intersect it. Formally, we have that
\[
\resistY{\obj_i}{\ObjSet} = \sum_{\substack{\obj_j \in \ObjSet
      \setminus \brc{ \obj_i}\\
      \obj_i \cap \obj_j \ne \emptyset}} x_j.
\]
We pick the object in $\ObjSet$ with minimal resistance, and set it as
the first element in the permutation $\Prm$ of the objects. We compute
the permutation by performing this ``extract-min'', with the variant
that objects that are already in the permutation are
ignored. Formally, if we computed the first $i$ objects in the
permutation to be $\Prm_{i} = \permut{\prm_1, \ldots, \prm_i}$, then
the $(i+1)$\th object $\prm_{i+1}$ is the one realizing
\begin{equation}
    \resistC_{i+1}
    = \min_{\obj \in \ObjSet \setminus \Prm_i} \resistY{\obj}{ \ObjSet
       \setminus \Prm_i}.
    \eqlab{resistance}
\end{equation}

The algorithm starts with an empty candidate set $\CSet$ and an empty
independent set $\ISet$, and scans the objects according to the
permutation in reverse order.  At the $i$\th stage, the algorithm
first decides whether to put the object $\prm_{n-i}$ in $\CSet$, by flipping
a coin that with probability $x(\prm_{n-i}) / \tau$ comes up heads,
where $x(\prm_{n-i})$ is the value assigned by the \LP to the object
$\prm_{n-i}$ and $\tau$ is some parameter to be determined shortly.
If $\prm_{n-i}$ is put into $\CSet$ then we further check whether
$\prm_{n-i}$ intersects any of the objects already added to the
independent set $\ISet$. If it does not intersect any objects in
$\ISet$, then it adds $\prm_{n-i}$ to $\ISet$ and continues to the
next iteration.

In the end of the execution, the set $\ISet$ is returned as the
desired independent set.

\subsection{Analysis}

Let $\ObjSet$ be a set of $n$ objects in the plane, and let
$\UnionComp{m}$ be the maximum union complexity of $m \leq n$ objects
of $\ObjSet$.  Furthermore, we assume that the function
$\UnionComp{\cdot}$ is a monotone increasing function which is well
behaved; namely,  $\UnionComp{n}/n$ is a non-decreasing function,
and there exists a constant $c$, such that $\UnionComp{ x
   r} \leq c \, \UnionComp{r}$, for any $r$ and $1 \leq x \leq 2$.  In
the following, a vertex $\pnt$ of $\VertX{\ObjSetA}$ is denoted by
$\pth{\pnt,i,j}$, to indicate that it is the result of the
intersection of the $i$\th and $j$\th object.

The key to our analysis lies in the following inequality, which we
prove by adapting the Clarkson technique \cite{cs-arscg-89}.

\begin{lemma}
    Let $\ObjSetA$ be any subset of $\ObjSet$.  Then $\ds
    \sum_{\pth{\pnt,i,j} \in \VertX{\ObjSetA}} x_i x_j = O\pth{
       \UnionComp{\Energy(\ObjSetA)}}$, where $\Energy(\ObjSetA) =
    \sum_{\obj \in \ObjSetA} x(\obj)$. 

    \lemlab{first}
\end{lemma}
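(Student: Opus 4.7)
My plan is to apply a Clarkson-style random sampling argument biased by the \LP values. Fix a small constant $c \in (0,1)$ (say $c = 1/2$) and form a random subset $\RSample \subseteq \ObjSetA$ by including each $\obj_i \in \ObjSetA$ independently with probability $c x_i$, so $\Ex{\cardin{\RSample}} = c\, \Energy(\ObjSetA)$. I will then bound, in two different ways, the expected number of arrangement vertices of $\ObjSetA$ that survive on the boundary of $\Union{\RSample}$.

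For the \emph{lower} bound, a vertex $(\pnt,i,j) \in \VertX{\ObjSetA}$ lies on $\partial \Union{\RSample}$ exactly when both $\obj_i$ and $\obj_j$ are sampled and no other object of $\ObjSetA$ whose interior contains $\pnt$ is sampled. Independence gives $\Prob{\obj_i,\obj_j \in \RSample} = c^2 x_i x_j$, while the \LP constraint at $\pnt \in \VertX{\ObjSet}$, restricted to $\ObjSetA$, yields
\[
    \sum_{\substack{\obj_k\in\ObjSetA,\ k\ne i,j\\ \pnt\in\obj_k}} x_k \:\leq\: 1 - x_i - x_j \:\leq\: 1.
\]
The elementary inequality $\prod_k (1-y_k) \geq 1 - \sum_k y_k$, valid for $y_k\in[0,1]$, then shows that the survival probability of $(\pnt,i,j)$ is at least $c^2(1-c)\, x_i x_j$. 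Summing over $\VertX{\ObjSetA}$ and using linearity of expectation gives
\[
    c^2(1 - c) \sum_{(\pnt,i,j) \in \VertX{\ObjSetA}} x_i x_j \:\leq\: \Ex{\UnionComp{\cardin{\RSample}}}.
\]

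For the \emph{upper} bound, the boundary of $\Union{\RSample}$ has at most $\UnionComp{\cardin{\RSample}}$ vertices by definition. Because $\UnionComp{\cdot}$ is only assumed to be near-linear, one cannot push the expectation inside for free. Instead I would partition the range of $\cardin{\RSample}$ into dyadic bands around its mean $\mu = c\, \Energy(\ObjSetA)$: on the band $\cardin{\RSample} \in [2^{k-1}\mu, 2^k \mu)$ we have $\UnionComp{\cardin{\RSample}} \leq \UnionComp{2^k \mu} \leq c_0^k\, \UnionComp{\mu}$ by iterating the hypothesis $\UnionComp{xr} \leq c_0 \UnionComp{r}$ for $1 \leq x \leq 2$, while Chernoff bounds make the probability of that band decay as $e^{-\Omega(2^k \mu)}$. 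The resulting geometric-times-exponential series converges, yielding $\Ex{\UnionComp{\cardin{\RSample}}} = O\pth{\UnionComp{\Energy(\ObjSetA)}}$. Combining with the lower bound and absorbing the constant $c^2(1-c)$ proves the claim.

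The main obstacle is precisely this last step, controlling the expectation of $\UnionComp{\cdot}$ despite its nonlinearity; everything else---the survival probability of a vertex and its link to the \LP constraint---is a direct consequence of independence and feasibility of $x$.
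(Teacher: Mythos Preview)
Your argument is essentially the paper's own proof: the same biased Clarkson sample with inclusion probability $x_i/2$, the same lower bound on the survival probability of a vertex via $\prod_k(1-a_k)\ge 1-\sum_k a_k$ together with the \LP constraint at~$\pnt$, and the same Chernoff-plus-doubling argument to control $\Ex{\UnionComp{\cardin{\RSample}}}$. The only cosmetic difference is that the paper sums over integer multiples $t\mu$ (writing $\UnionComp{t\mu}=t^{O(1)}\UnionComp{\mu}$) whereas you use dyadic bands $2^k\mu$, which is the same estimate in different bookkeeping.
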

\begin{proof}
    Consider a random sample $\RSample'$ of $\ObjSetA$, where an
    object $\obj_i$ is being picked up with probability $x_i
    /2$. Clearly, we have that $(\pnt, i,j) \in \VertX{\ObjSet}$
    appears on the boundary of the union of the objects of
    $\RSample'$, if and only if $\obj_i$ and $\obj_j$ are being
    picked, and none of the objects that cover $\pnt$ are being chosen
    into $\RSample'$. In particular, let $\Union{\RSample'}$ denote
    the vertices on the boundary of the union of the objects of
    $\RSample'$. We have 
    \begin{align*}
        \Prob{ \;\MakeBig \pth{\pnt,i,j} \in \Union{\RSample'}}
        \hspace{-1cm} & \hspace{1cm} = \frac{x_i}{2} \cdot
        \frac{x_j}{2}
        \!\prod_{\substack{\obj_k\ni\pnt,\\
              k\ne i, k\ne j}} \pth{1-\frac{x_k}{2}}%
        \geq %
        \frac{x_i x_j}{4} \left[1-\!\sum_{\substack{
                  \obj_k\ni\pnt,\\
                  k\ne i, k\ne j}} \frac{x_k}{2}\right] \ \geq\ \frac{
           x_i x_j }{8},
    \end{align*}
    by the inequality $\prod_k (1-a_k)\ge 1-\sum_k a_k$ for $a_k\in
    [0,1]$, since $\sum_{\obj_k\ni\pnt} x_k \leq 1$ (as the \LP
    solution is valid). On the other hand, the number of vertices on
    the union is $\cardin{\Union{\RSample'}} \leq \UnionComp{
       \cardin{\RSample'}}$. Thus,
    \begin{align*} 
        \sum_{\pth{\pnt, i,j} \in \VertX{\ObjSet}} \frac{ x_i x_j }{8}
        &\leq \sum_{\pth{\pnt, i,j} \in \VertX{\ObjSet}} \Prob{
           \MakeBig \pth{\pnt,i,j} \in \Union{\RSample'}}
        = \Ex{
           \MakeBig \cardin{\Union{\RSample'}}}
        \leq%
         \Ex{
           \MakeBig \UnionComp{ \cardin{\RSample'}}}.
    \end{align*}
    To bound last expression, observe that $\mu =
    \Ex{\cardin{\RSample'}} = \Energy(\ObjSetA)/2$. Furthermore, by
    Chernoff inequality, $\Prob{ \cardin{\RSample'} >
       (t+1) \mu} \leq 2^{-t}$. Thus, $\Ex{
       \UnionComp{ \cardin{\RSample'}}} \leq \sum_{t=1}^\infty 
    2^{-t+1}\UnionComp{t \mu} = \sum_{t=1}^\infty 2^{-t+1} t^{O(1)}
    \UnionComp{\mu } = O\pth{\UnionComp{\mu }} = O\pth{\UnionComp{ 
          \Energy(\ObjSetA)}}$, since $\UnionComp{\cdot}$ is well behaved.
\end{proof}

\begin{lemma}
    For any $i$, the resistance of the $i$\th object
    $\prm_i$ (as defined by \Eqref{resistance}) is $\ds \resistC_i = O\pth{
       \frac{\UnionComp{\EnergyX{\ObjSet}}}{\EnergyX{\ObjSet}} }$.

    \lemlab{resistance:single:object}
\end{lemma}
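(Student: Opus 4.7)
The plan is to compare the minimum resistance $\resistC_i$ to the $x$-weighted average of resistances taken over objects in $\ObjSetA = \ObjSet \setminus \Prm_{i-1}$, and then use \lemref{first} to bound that average.

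First, I would rewrite the quantity $\sum_{\obj_k \in \ObjSetA} x_k \cdot \resistY{\obj_k}{\ObjSetA}$ by unfolding the definition of resistance and swapping the order of summation: it becomes a sum of products $x_j x_k$ over ordered pairs $(j,k)$ of distinct intersecting objects in $\ObjSetA$, which is exactly twice the corresponding unordered-pair sum.

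Next, I would connect that unordered-pair sum to $\VertX{\ObjSetA}$. Since the algorithm assumes that no object of $\ObjSet$ is fully contained in another, the boundaries of any two intersecting objects must cross an even number of times, at least twice, so each intersecting pair $\{j,k\}\subseteq\ObjSetA$ contributes at least two vertices of the form $(\pnt, j, k)$ to $\VertX{\ObjSetA}$. Combined with \lemref{first} applied to $\ObjSetA$, this yields $\sum_{\obj_k \in \ObjSetA} x_k \cdot \resistY{\obj_k}{\ObjSetA} \le \sum_{(\pnt, j, k) \in \VertX{\ObjSetA}} x_j x_k = O\pth{\UnionComp{\EnergyX{\ObjSetA}}}$.

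Finally, since $\prm_i$ was chosen to \emph{minimize} $\resistY{\cdot}{\ObjSetA}$ over $\ObjSetA$, its resistance is bounded above by the $x$-weighted average, namely $\resistC_i \le \pth{\sum_k x_k \resistY{\obj_k}{\ObjSetA}}/\EnergyX{\ObjSetA} = O\pth{\UnionComp{\EnergyX{\ObjSetA}}/\EnergyX{\ObjSetA}}$. To finish, I would invoke the well-behavedness hypothesis that $\UnionComp{m}/m$ is non-decreasing, together with the obvious inequality $\EnergyX{\ObjSetA}\le\EnergyX{\ObjSet}$, to replace $\EnergyX{\ObjSetA}$ in the numerator and denominator by $\EnergyX{\ObjSet}$. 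The one subtle point is the factor-of-two bookkeeping between intersecting pairs and arrangement vertices, which is exactly where the no-containment assumption earns its keep; the degenerate case $\EnergyX{\ObjSetA}=0$ is harmless since then every resistance in $\ObjSetA$ is already zero.
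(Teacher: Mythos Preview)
Your proposal is correct and follows essentially the same argument as the paper: bound the minimum resistance by the $x$-weighted average, rewrite the numerator as (at most) the vertex sum $\sum_{(\pnt,j,k)\in\VertX{\ObjSetA}} x_j x_k$, apply \lemref{first}, and then use monotonicity of $\UnionComp{m}/m$ to pass from $\EnergyX{\ObjSetA}$ to $\EnergyX{\ObjSet}$. Your explicit handling of the factor-of-two bookkeeping via the no-containment assumption and of the degenerate case $\EnergyX{\ObjSetA}=0$ is, if anything, slightly more careful than the paper's presentation.
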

\begin{proof}
    Fix an $i$, and let $\ObjSetB = \ObjSet \setminus \brc{\prm_1,\ldots,
       \prm_{i-1}}$. By \lemref{first},
    \begin{align*}
        &\sum_{\obj_j \in \ObjSetB} x_j \resistY{\obj_j}{\ObjSetB} =
        2\sum_{\pth{\pnt,i,j} \in \VertX{\ObjSetB}} x_i x_j%
        =%
        O\pth{ \UnionComp{\Energy(\ObjSetB)}}
        \\%
        &\;\implies\;%
        \sum_{\obj_j \in \ObjSetB} \frac{x_j}{\EnergyX{\ObjSetB}}
        \resistY{\obj_j}{\ObjSetB}%
        =%
        O\pth{ \frac{
              \UnionComp{\EnergyX{\ObjSetB}}}{\EnergyX{\ObjSetB}}}
        =%
        O\pth{ \frac{
              \UnionComp{\EnergyX{\ObjSet}}}{\EnergyX{\ObjSet}}}
    \end{align*}
    by the monotonicity of $\UnionComp{n }/{n}$, where
    $\EnergyX{\ObjSetB} = \sum_{\obj_j \in \ObjSetB} x_j$.  It follows
    that $\min_{\obj_j \in \ObjSetB} \resistY{\obj_j}{\ObjSetB} =
    O\pth{ \frac{\UnionComp{\EnergyX{\ObjSet}}}{\EnergyX{\ObjSet}} }$.
\end{proof}

\begin{lemma}
    For a sufficiently large constant $c$, setting $\tau =c
    \frac{\UnionComp[]{\EnergyX{\ObjSet}}}{\EnergyX{\ObjSet}}$, the
    algorithm in Section~\ref{sec:alg} outputs in expectation an independent set of
    weight $\Omega( (n / \UnionComp{n}) \Opt )$.

    \lemlab{contention}
\end{lemma}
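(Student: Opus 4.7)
The plan is to lower-bound, for each object $\prm_j$, the probability it ends up in $\ISet$ by $\Omega(x_j/\tau)$, and then sum $w_j$ times this bound. Combined with $\Opt = \sum_j w_j x_j$, this will yield the claimed expected weight.

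First I would exploit the reverse-order scan: when $\prm_j$ is inspected, every object already in $\CSet$ (and hence in $\ISet$) belongs to $\brc{\prm_{j+1}, \ldots, \prm_n}$. Since $\ISet \subseteq \CSet$, a sufficient condition for $\prm_j$ to be admitted into $\ISet$ is that (i)~the coin for $\prm_j$ comes up heads, and (ii)~for every $k > j$ with $\prm_k \cap \prm_j \neq \emptyset$ the coin for $\prm_k$ came up tails. Since the coins defining $\CSet$ are mutually independent,
\[
\Prob{\prm_j \in \ISet} \;\geq\; \frac{x_j}{\tau}\prod_{\substack{k > j \\ \prm_k \cap \prm_j \neq \emptyset}}\pth{1 - \frac{x_k}{\tau}} \;\geq\; \frac{x_j}{\tau}\pth{1 - \frac{1}{\tau}\sum_{\substack{k > j \\ \prm_k \cap \prm_j \neq \emptyset}} x_k},
\]
using $\prod_k(1 - a_k) \geq 1 - \sum_k a_k$, as already invoked in \lemref{first}.

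The crucial observation is that the inner sum is exactly $\resistY{\prm_j}{\ObjSet \setminus \Prm_{j-1}}$, because by the construction of the permutation we have $\ObjSet \setminus \Prm_{j-1} \setminus \brc{\prm_j} = \brc{\prm_{j+1}, \ldots, \prm_n}$. Since $\prm_j$ was extracted as the minimum-resistance element at step $j$, this sum equals $\resistC_j$, and \lemref{resistance:single:object} bounds it by $O\pth{\UnionComp{\EnergyX{\ObjSet}}/\EnergyX{\ObjSet}}$. Taking the constant $c$ in the definition of $\tau$ sufficiently large then forces the bracketed factor above to be at least $1/2$, so $\Prob{\prm_j \in \ISet} \geq x_j/(2\tau)$.

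Summing, the expected weight of $\ISet$ is $\sum_j w_j \Prob{\prm_j \in \ISet} \geq \Opt/(2\tau) = \Omega\pth{\EnergyX{\ObjSet}/\UnionComp{\EnergyX{\ObjSet}}}\cdot \Opt$. Because $\UnionComp{m}/m$ is non-decreasing in $m$ and $\EnergyX{\ObjSet} \leq n$, we have $\EnergyX{\ObjSet}/\UnionComp{\EnergyX{\ObjSet}} \geq n/\UnionComp{n}$, giving the desired $\Omega\pth{(n/\UnionComp{n})\Opt}$ bound. The main obstacle is conceptual rather than computational: seeing that the \emph{reverse} processing order is what aligns the potential blockers of $\prm_j$ with precisely the set against which $\prm_j$'s minimum-resistance property was certified, so that \lemref{resistance:single:object} can be applied directly to each $j$.
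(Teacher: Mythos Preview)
Your argument is correct and matches the paper's proof essentially line for line: both lower-bound $\Prob{\prm_j\in\ISet}$ by $\frac{x_j}{\tau}$ times the probability that no later (in the permutation) intersecting object was put into $\CSet$, identify the relevant sum as $\resistC_j$, invoke \lemref{resistance:single:object} to make it at most $\tau/2$, and sum to get $\Opt/(2\tau)$. Your final step, using the monotonicity of $\UnionComp{m}/m$ to pass from $\EnergyX{\ObjSet}/\UnionComp{\EnergyX{\ObjSet}}$ to $n/\UnionComp{n}$, is stated more explicitly than in the paper but is exactly the intended reasoning.
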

\begin{proof}
    Indeed, the $j$\th object in the permutation is added to $\CSet$
    with probability $x_j / \tau$. Let $\ObjSetB$ be the set of all
    objects of $\ObjSet$ that were already considered and intersect
    $x_j$. Clearly, $\EnergyX{\ObjSetB}$ is exactly the resistance  $\resistC_j$ of
    $\obj_j$. Furthermore by picking $c$ large
    enough, we have that $\EnergyX{\ObjSetB} = \resistC_j \leq \tau/2$
    by \lemref{resistance:single:object}. This implies that
    \begin{align*}
        \Prob{ \obj_j \in \ISet \sep{ \obj_j \in \CSet}} %
        & =%
        \Prob{ \ObjSetB \cap \CSet = \emptyset \sep{ \obj_j \in
              \CSet}}%
        =%
        \Prob{ \ObjSetB \cap \CSet = \emptyset}%
        =%
        \prod_{\obj_j \in \ObjSetB} \pth{1 - \frac{x_j}{\tau}}%
        \\&\geq%
        1 - \sum_{\obj_j \in \ObjSetB} \frac{x_j}{\tau} %
        =%
        1 -\frac{\EnergyX{\ObjSetB}}{\tau}%
        \geq%
        \frac{1}{2},
    \end{align*}
    by the inequality $\prod_k (1-a_k)\ge 1-\sum_k a_k$ for $a_k\in
    [0,1]$. Now, we have that 
    \[
    y_j = \Prob{\obj_j \in \ISet} = 
    \Prob{ \obj_j \in \ISet \sep{ \obj_j \in \CSet}}
    \cdot \Prob{ \obj_j \in \CSet} \geq \frac{x_j}{2 \tau}.
    \]
    As such, the expected  value of the independent set output is 
    \[
    \sum_j y_j w_j = \sum_j \frac{x_j}{2 \tau} w_j = \Omega\pth{ 
       \frac{\Opt}{\tau}} = 
    \Omega\pth{ \frac{n}{\UnionComp{n}} \Opt},
    \]
    as $\EnergyX{\ObjSet} \leq n$.
\end{proof}

\subsection{Remarks}
\label{sec:discrete}

\paragraph{Variant.}
In the conference version of this paper~\cite{ch-aamis-09}, we
proposed a different variant of the algorithm, where instead of
ordering the objects by increasing resistance, we order the objects by
decreasing weights.  An advantage of the resistance-based algorithm is
that it is oblivious to (i.e., does not look at) the input weights.
This feature is shared, for example, by Varadarajan's recent algorithm
for weighted geometric set cover via ``quasi-random
sampling''~\cite{v-wgscq-10}.  Another advantage of the
Resistance-based algorithm is its extendibility to other settings; see
\secref{submodular}.

\paragraph{Derandomization.}
The variance of the expected weight of the returned independent set~$\ISet$
could be high, but
fortunately the algorithm can be derandomized by the standard method
of conditional probabilities/expectations \cite{mr-ra-95}.  To this
end, observe that the above analysis provide us with a constructive
way to estimate the price of generated solution. It is now
straightforward to decide for each region whether to include it or not
inside the generated solution, using conditional
probabilities. Indeed, for each object we compute the expected price
of the solution if it is include in the solution, and if it is not
included in the solution, and pick the one that has higher value.

\paragraph{Coping with object containment.}
We have assumed that no object is fully contained in another, but this
assumption can be removed by adding the constraint
$\sum_{\obj_i\subset\obj_j} x_j \:\le\: 1$ for each $i$ to the \LP.
Then, for any subset $\ObjSetA$ of $\ObjSet$, we have
\[
\sum_{\substack{\obj_i, \obj_j \in \ObjSetA\\
       \obj_i\subset\obj_j}} x_ix_j
\leq \Energy(\ObjSetA),
\]
and so \lemref{first} still holds.  The rest of the analysis then
holds verbatim.

\paragraph{Time to solve the \LP.}

This \LP is a packing \LP with $O(n^2)$ inequalities, and $n$
variables. As such, it can be $(1+\eps)$-approximated in $O\pth{n^3 +
   \eps^{-2} n^2 \log n } = O\pth{n^3}$ by a randomized algorithm that
succeeds with high probability \cite{ky-bsfpc-07}. For our purposes,
it is sufficient to set $\eps$ to be a sufficient small constant, say
$\eps=10^{-4}$.

\bigskip

We have thus proved:

\begin{theorem}
    Given a set of $n$ weighted objects in the plane with union
    complexity $O \pth{ \UnionComp{n} }$, one can compute an
    independent set of total weight $\Omega( (n / \UnionComp{n}) \opt
    )$, where $\opt$ is the maximum weight over all independent sets
    of the objects. The running time of the randomized algorithm is
    $O\pth{n^3}$, and polynomial for the deterministic version.

    \thmlab{main}
\end{theorem}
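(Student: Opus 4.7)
The plan is to simply assemble the pieces already developed in this section. First, I would run the algorithm of \secref{alg} with the parameter $\tau = c\,\UnionComp{\EnergyX{\ObjSet}}/\EnergyX{\ObjSet}$ from \lemref{contention}, after solving the \LP relaxation \Eqref{l:p:indep}. By \lemref{contention}, the expected weight of the returned independent set $\ISet$ is $\Omega\pth{(n/\UnionComp{n})\Opt}$. Since $\Opt \geq \opt$ (the \LP value upper-bounds the integral optimum) and $\ISet$ is a feasible independent set by construction, this already yields the claimed approximation bound in expectation.

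To handle the case where some objects may be contained in others, I would use the modification from the ``Coping with object containment'' remark: augment the \LP with the constraint $\sum_{\obj_i \subset \obj_j} x_j \leq 1$ for each $i$. As argued there, for any subset $\ObjSetA$ the additional pairwise contribution $\sum_{\obj_i \subset \obj_j, \obj_i,\obj_j \in \ObjSetA} x_i x_j \leq \EnergyX{\ObjSetA}$, so \lemref{first} continues to hold (with the $\UnionComp{\EnergyX{\ObjSetA}}$ bound subsuming this extra term), and hence so do \lemref{resistance:single:object} and \lemref{contention}.

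For the running time, the \LP has $n$ variables and $O(n^2)$ inequalities (one per vertex of $\ArrX{\ObjSet}$, plus the containment constraints), and as noted in the ``Time to solve the \LP'' remark it can be $(1+\eps)$-approximated in $O(n^3)$ time for a small constant $\eps$ via the randomized packing-\LP solver of Koufogiannakis and Young; setting $\eps = 10^{-4}$ only affects the hidden constants in \lemref{contention}. Computing the permutation $\Prm$ via repeated extract-min on resistances takes $O(n^2)$ time, and the final randomized scan is linear per trial, so the overall randomized bound is $O(n^3)$.

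Finally, to obtain the deterministic version I would apply the method of conditional expectations, as in the ``Derandomization'' remark: the analysis of \lemref{contention} is constructive in the sense that the lower bound on $\Prob{\obj_j \in \ISet}$ is computable from the \LP values and the previously made decisions, so one can process the objects in the reverse permutation order and, at each step, deterministically choose whether to include $\prm_{n-i}$ in $\CSet$ by picking the option that keeps the conditional expectation of the output weight at least as large as the unconditional bound. The only step I expect to require care is verifying that the conditional expectation can be computed (or suitably lower bounded) in polynomial time after each decision; this follows because the estimate in \lemref{contention} reduces to a sum over remaining objects of local quantities (resistances in the remaining suffix), which can be updated in polynomial time per step, giving a polynomial deterministic algorithm as claimed.
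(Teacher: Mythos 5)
Your proposal is correct and matches the paper's own route: the theorem is obtained exactly by combining \lemref{contention} (with $\Opt \geq \opt$), the containment-handling constraint, the $O(n^3)$ packing-\LP solver, and derandomization by conditional expectations, which is precisely how the paper assembles the preceding lemmas and remarks into \thmref{main}. No gaps beyond the level of detail the paper itself leaves implicit.
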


\begin{corollary}
    Given a set of $n$ weighted pseudo-disks in the plane, one can
    compute, in $O\pth{n^3}$ time, a constant factor approximation to
    the maximum weight independent set of pseudo-disks.
\end{corollary}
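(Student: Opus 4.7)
The plan is to apply \thmref{main} directly, with the union complexity function $\UnionComp{\cdot}$ instantiated for pseudo-disks. As noted in the introduction, the Kedem--Livne--Pach--Sharir bound implies that the boundary of the union of any $m$ pseudo-disks consists of at most $6m - 12$ arcs, so we may take $\UnionComp{m} = O(m)$.

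First I would verify that this choice satisfies the hypotheses required by \thmref{main}: namely that $\UnionComp{\cdot}$ is monotone increasing, that $\UnionComp{n}/n$ is non-decreasing, and that $\UnionComp{xr} \leq c\,\UnionComp{r}$ for any $r$ and $1 \leq x \leq 2$. With $\UnionComp{m} = O(m)$ the ratio $\UnionComp{m}/m$ is bounded above by a constant, so $\UnionComp{m}/m$ is (up to constants) non-decreasing, and doubling the argument at worst doubles the value, so the well-behavedness condition holds with $c = 2$. I would also observe that any subset of a collection of pseudo-disks is itself a collection of pseudo-disks, so the linear union bound applies uniformly to all subsets, as the preamble of \thmref{main} requires.

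With these hypotheses in place, \thmref{main} yields in $O\pth{n^3}$ time an independent set of total weight
\[
    \Omega\pth{ \frac{n}{\UnionComp{n}}\, \opt } \:=\: \Omega\pth{ \frac{n}{O(n)}\, \opt } \:=\: \Omega(\opt),
\]
which is a constant-factor approximation to the maximum-weight independent set of pseudo-disks. There is no substantive obstacle: all the work is carried out by \thmref{main} and by the $6n-12$ union bound for pseudo-disks, and the corollary is a one-line substitution once those are in hand.
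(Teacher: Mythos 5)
Your proposal is correct and matches the paper's (implicit) reasoning exactly: the corollary is an immediate instantiation of \thmref{main} with the linear union complexity bound $6m-12$ for pseudo-disks \cite{klps-ujrcf-86}, which makes $\Omega(n/\UnionComp{n}) = \Omega(1)$. Your extra verification that a linear $\UnionComp{\cdot}$ is well behaved and that subsets of pseudo-disks remain pseudo-disks is a welcome bit of care, but it is the same route the paper takes.
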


\thmref{main} can be applied to cases where the union complexity is
low.  Even in the case of fat objects, where
{\PTAS}s are known~\cite{c-ptasp-03,
   ejs-ptasg-05}, the above approach is still interesting as
it can be extended to more
general settings, as noted in 
\secref{submodular}.

\subsection{A combinatorial result: piercing number}

In the unweighted case, we obtain the following result as a byproduct:

\begin{theorem}
    Given a set of $n$ pseudo-disks in the plane, 
    let $\opt$ be the size of the maximum independent set and
    let $\opt'$ be the size of the minimum set of points that
    pierce all the pseudo-disks.  Then $\opt=\Omega(\opt')$.

    \thmlab{ratio}
\end{theorem}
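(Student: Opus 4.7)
The plan is to derive \thmref{ratio} as a byproduct of \thmref{main} combined with known results on approximating the piercing problem for pseudo-disks. Throughout, let $\Opt$ denote the value of the \LP relaxation \Eqref{l:p:indep} taken with uniform weights $w_i = 1$.

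Since pseudo-disks have linear union complexity, \thmref{main} applied in the unweighted case produces an integer independent set of size $\Omega(\Opt)$; in particular $\opt = \Omega(\Opt)$. It therefore suffices to show $\opt' = O(\Opt)$. To this end, I would take the \LP dual of \Eqref{l:p:indep}: introducing a variable $y_p \geq 0$ for each arrangement vertex $p \in \VertX{\ObjSet}$ and a variable $z_i \geq 0$ for each upper bound constraint $x_i \leq 1$, the dual is to minimize $\sum_p y_p + \sum_i z_i$ subject to $z_i + \sum_{p \in \obj_i} y_p \geq 1$ for every $i$. By strong \LP duality this has value $\Opt$. From an optimal dual solution I would construct a fractional piercing of $\ObjSet$ supported on $\reals^2$ of the same total weight, by placing weight $y_p$ at each arrangement vertex $p$ and additional weight $z_i$ at an arbitrary interior point of each $\obj_i$ with $z_i > 0$; the dual constraint then guarantees total weight at least $1$ inside every $\obj_i$, so this is a valid fractional piercing of weight $\Opt$.

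Finally, as noted in the introduction, the Br\"onnimann--Goodrich framework~\cite{bg-aoscf-95} combined with the Pyrga--Ray $\eps$-net construction for pseudo-disks~\cite{pr-nepen-08} yields an integer piercing set whose size is within a constant factor of the fractional piercing optimum. Applying this rounding to the fractional piercing constructed above gives $\opt' = O(\Opt)$, which completes the proof. The most delicate step is the \LP-duality argument together with the re-interpretation of the slack variables $z_i$ as piercing weight placed inside $\obj_i$; once this bridge is built, the existing pseudo-disk hitting-set machinery does the rest.
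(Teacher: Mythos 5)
Your proposal is correct and follows essentially the same route as the paper: both combine the constant integrality gap of the independent-set \LP (from the preceding analysis/\thmref{main} in the unweighted case) with \LP duality and the $\eps$-net-based constant integrality gap for the piercing \LP via Pyrga--Ray \cite{pr-nepen-08} and the Br\"onnimann--Goodrich/Long/Even-et-al.\ machinery. The only difference is cosmetic: the paper simply asserts the two {\LP}s are duals of one another, whereas you explicitly carry the dual variables $z_i$ for the constraints $x_i\le 1$ and realize them as fractional piercing weight placed inside the corresponding objects, which is a slightly more careful rendering of the same duality step.
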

\begin{proof}
    By the preceding analysis, we have $\opt=\Omega(\Opt)$, i.e., the
    integrality gap of our \LP is a constant.  (Here, all the weights are 
    equal to 1.)

    For piercing, the \LP relaxation is
    \begin{align*}    
        \min \;\; &\sum_{\pnt\in\VertX{S}}^n y_\pnt\\
        &
        \sum_{\pnt \in \obj_i} y_\pnt \geq 1 &  \forall i=1,\ldots,n\\
        &0 \leq y_\pnt \leq 1.
    \end{align*}
    Let $\Opt'$ be the value of this $\LP$.  Known
    analysis~\cite{l-upsda-01, ers-hsvcs-05} implies that the
    integrality gap of this \LP is constant if there exist $\eps$-nets
    of linear size for a corresponding class of hypergraphs formed by
    objects in $\ObjSet$ and points in $\VertX{S}$.  Pyrga and
    Ray~\cite[Theorem~12]{pr-nepen-08} obtained such an existence
    proof for this (``primal'') hypergraph for pseudo-disks.  Thus,
    $\opt'=O(\Opt')$.

    To conclude, observe that the two {\LP}s are precisely the dual of
    each other, and so $\Opt=\Opt'$.
\end{proof}

\subsection{A discrete version of the independent set problem}

We now show that our algorithm can be extended to solve
a variant of the independent set problem where
we are given not only a set $\ObjSet$ of $n$ weighted
objects but also a set $\PntSet$ of $m$ points.  The goal is to select a 
maximum-weight
subset $\OptSol\subseteq\ObjSet$ such that each point $\pnt\in\PntSet$
is contained in at most one object of $\OptSol$.  (The original
problem corresponds to the case where $\PntSet$ is the entire plane.)
Unlike in the original independent set problem, it is not clear if
local search yields good approximation here, even in the unweighted case.

We can use the same $\LP$ as in \secref{alg} to solve this problem,
except that we now have a constraint for each $\pnt\in\PntSet$ instead of
each $\pnt\in\VertX{\ObjSet}$.  In the rest of the algorithm
and analysis, we just reinterpret ``$\obj_i\cap\obj_j\neq\emptyset$''
to mean ``$\obj_i\cap\obj_j\cap\PntSet\neq\emptyset$''.

\lemref{first} is now replaced by the following.

\begin{lemma}
    Let $\ObjSetA$ be any subset of $\ObjSet$.  Then $\ds
    \sum_{\substack{\obj_i\cap \obj_j \cap \PntSet
          \ne \emptyset\\
          \obj_i, \obj_j \in \ObjSetA}} x_i x_j = O\pth{
       \UnionComp{\Energy(\ObjSetA)}}$.

    \lemlab{first:d}
\end{lemma}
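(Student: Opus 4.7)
The plan is to adapt the Clarkson random-sampling argument from \lemref{first} to the discrete setting. The key change is that the ``witness'' of an intersecting pair must now be a chosen point of $\PntSet$ rather than an arrangement vertex, since the \LP constraints in the discrete version live only at points of $\PntSet$.

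For each pair $\obj_i, \obj_j \in \ObjSetA$ with $\obj_i \cap \obj_j \cap \PntSet \ne \emptyset$, I would arbitrarily fix a canonical witness $\pnt_{ij} \in \obj_i \cap \obj_j \cap \PntSet$. Following the proof of \lemref{first}, I would form a random sample $\RSample'$ of $\ObjSetA$ by including each $\obj_i$ independently with probability $x_i/2$, and consider the event $E_{ij}$ that both $\obj_i, \obj_j \in \RSample'$ and no other object of $\RSample'$ contains $\pnt_{ij}$. The same calculation as before yields $\Prob{E_{ij}} \ge x_i x_j / 8$, this time using the \LP constraint $\sum_{\obj_k \ni \pnt_{ij}} x_k \le 1$, which is valid precisely because $\pnt_{ij} \in \PntSet$.

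The main departure from the proof of \lemref{first}, and the step I expect to be the main obstacle, is the upper bound on $\sum_{(i,j)} \Prob{E_{ij}} = \Ex{\,|\{(i,j) : E_{ij} \text{ holds}\}|\,}$. When $E_{ij}$ holds, the witness $\pnt_{ij}$ lies in a face of $\ArrX{\RSample'}$ of depth exactly $2$; since such a face is contained in a unique pair of sampled objects, distinct pairs yield distinct faces. Hence the number of pairs satisfying $E_{ij}$ is at most the number of depth-$\le 2$ faces of $\ArrX{\RSample'}$. An additional layer of Clarkson--Shor sampling then bounds this by $O(\UnionComp{|\RSample'|})$: a further random $1/3$-sub-sample $\RSample''$ of $\RSample'$ turns any fixed depth-$\le 2$ feature into a union-boundary feature with constant probability, while the union itself has at most $\UnionComp{|\RSample''|} = O(\UnionComp{|\RSample'|})$ features by the well-behavedness of $\UnionComp{\cdot}$.

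Taking expectations,
\[
\sum_{(i,j)} \frac{x_i x_j}{8} \;\le\; \Ex{\, O\pth{\UnionComp{|\RSample'|}} \,} \;=\; O\pth{\UnionComp{\Energy(\ObjSetA)}},
\]
where the last equality is obtained exactly as in \lemref{first}, via the Chernoff bound $\Prob{|\RSample'| > (t+1)\mu} \le 2^{-t}$ together with the well-behavedness of $\UnionComp{\cdot}$. Multiplying through by $8$ yields the claim.
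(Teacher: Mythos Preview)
Your argument is correct and takes a genuinely different route from the paper. The paper samples $\RSample$ with probabilities $x_i/2$, builds the vertical decomposition $\VD{\RSample}$ of the complement of $\Union{\RSample}$, and invokes a Clarkson-style moment bound $\Ex{\sum_{\trap\in\VD{\RSample}} x_\trap^2}=O(\UnionComp{\Energy(\ObjSetA)})$; it then observes that each $\pnt\in\PntSet$ is \emph{active} (uncovered by $\RSample$) with probability $\ge 1/2$, and that the contribution of all pairs through an active point is absorbed by the $x_\trap^2$ of the trapezoid containing it. Your approach instead fixes one witness per pair, charges each surviving pair to the depth-$2$ cell of $\ArrX{\RSample'}$ containing its witness, and bounds the number of such cells. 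This is more self-contained---it stays closer to the proof of \lemref{first} and avoids both the vertical decomposition and the moment lemma---while the paper's route is more modular and reuses an existing exponential-moment estimate. One small point: your ``further $1/3$-sub-sample'' step is really a Clarkson--Shor bound on \emph{vertices} at depth $\le 2$, not on faces (a face is not a constant-description configuration, so it does not directly survive sub-sampling as a union-boundary face). The clean way to finish is to quote the standard $\le\!k$-level complexity bound, or to pass from depth-$\le 2$ cells to depth-$\le O(1)$ vertices via Euler's formula and then apply Clarkson--Shor to the vertices; either fix is routine.
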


\begin{proof}
    Consider a random sample $\RSample$ of $\ObjSetA$, where an object
    $\obj_i$ is being picked up with probability $x_i /2$.  Let
    $\VD{\RSample}$ be the cells in the vertical decomposition of the
    complement of the union $\Union{\RSample}$.  For a cell 
    $\trap \in \VD{\RSample}$, let $x_\trap = \sum_{\obj_i \in
       \ObjSetA, \obj_i \cap \mathrm{int}(\trap) \ne \emptyset } x_i$
    be the total energy of the objects of $\ObjSetA$ that intersects
    the interior of $\trap$. A minor modification of the analysis of
    Clarkson \cite{cs-arscg-89} implies that, for any constant $c$, it
    holds $\Ex{\sum_{\trap \in \VD[]{\RSample}} \pth[]{x_\trap}^c} =
    O\pth{ \MakeBig\! \Ex{ \MakeSBig \UnionComp{\Energy(\ObjSetA)}}}$.
    
    A point of $\pnt \in \PntSet$ is \emph{active} for $\RSample$, if
    it is outside the union of objects of $\RSample$. Let $\PntSet'$
    be the set of active points in $\PntSet$. We have that 
    \[
    \sum_{\obj_i \cap \obj_j \cap \PntSet' \ne \emptyset} x_i x_j
    \leq \sum_{\trap \in \VD[]{\RSample}} x_\trap^2.
    \]
    Furthermore, by arguing as in \lemref{first}, every point $\pnt
    \in \PntSet$ has probability at least $\prod_{\obj_i \in \ObjSetA,
       \pnt \in \obj_i} (1-x_i/2) \geq 1/2$ to be active.  Thus,
    \[
    \frac{1}{2} \sum_{\obj_i \cap \obj_j \cap \PntSet \ne \emptyset}
    x_i x_j%
    \leq %
    \Ex{ \sum_{\obj_i \cap \obj_j \cap \PntSet' \ne \emptyset} x_i
       x_j}%
    \leq%
    \Ex{ \sum_{\trap \in \VD[]{\RSample}} x_\trap^2}%
    =%
    O\pth{\MakeBig\! \Ex{ \MakeSBig \UnionComp{\Energy(\ObjSetA)}}}%
    = %
    O\pth{ \MakeSBig \UnionComp{\Energy(\ObjSetA)}},
    \]
    again, by arguing as in \lemref{first}.
\end{proof}

The above proof is inspired by a proof from \cite{aacs-lalsp-98}.
There is an alternative argument based on \emph{shallow cuttings},
but the known proof for the existence of such cuttings requires a more
complicated sampling analysis \cite{m-rph-92}.

The rest of the analysis then goes through unchanged.  We therefore
obtain an $O(1)$-approximation algorithm for the discrete independent
set problem for unweighted or weighted pseudo-disks in the plane.

\subsection{Contention resolution and submodular functions}
\seclab{submodular}

The algorithm of \thmref{main} can be interpreted as a contention
resolution scheme; see Chekuri \etal \cite{cvz-sfmmm-11} for
details. The basic idea is that given a feasible fractional solution
$x\in[0,1]^n$, a \emphi{contention resolution scheme} scales down
every coordinate of $x$ (by some constant $b$) such that given a
random sample $\CSet$ of the objects according to $x$ (i.e., the
$i$\th object $\obj_i$ is picked with probability $b x_i$), the
contention resolution scheme computes (in our case) an independent set~$\ISet$
such that $\Prob{\obj_i \in \ISet \sep{ \obj_i \in \CSet} } \geq c$,
for some positive constant $c$. The proof of \lemref{contention}
implies exactly this property in our case.

As such, we can apply the results of Chekuri \etal \cite{cvz-sfmmm-11}
to our settings. In particular, they show that one can obtain constant
approximation to the optimal solution, when considering independence
constraints and submodular target function. Intuitively, submodularity
captures the diminishing-returns nature of many optimization
problems. Formally, a function $g:2^{\ObjSet} \rightarrow \Re$ is
\emphi{submodular} if $g\pth{X \cup Y} + g\pth{X \cap Y} \leq g(X) +
g(Y)$, for any $X,Y \subseteq \ObjSet$.

As a concrete example, consider a situation where each object in
$\ObjSet$ represents a coverage area by a single antenna. If a point
is contained inside such an object, it is fully serviced. However,
even if it is not contained in a object, it might get some reduced
coverage from the closest object in the chosen set. In particular, let
$\fDistX{r}$ be some coverage function which is the amount of coverage
a point gets if it is at distance $r$ from the closest object in the
current set~$\ISet$. We assume here $\fDistX{\cdot}$ is a monotone
decreasing function. Because of interference between antennas we
require that the regions these antennas represent do not intersect
(i.e., the set of antennas chosen needs to be an independent set).

\begin{lemma}
    Let $\PntSet$ be a set of points, and let $\ObjSet$ be a set of
    objects in the plane. Let $\fDist(\cdot)$ be a monotone
    decreasing function. For a subset $\ObjSetA \subseteq \ObjSet$,
    consider the target function
    \[
    \PrcFunc{\PntSet}{\ObjSetA} = \sum_{\pnt \in \PntSet} \fDist\pth{
       \DistNN{\pnt}{\ObjSetA}},
    \]
    where  $\DistNN{\pnt}{\ObjSetA}$ is the distance of $\pnt$ to its
    nearest neighbor in $\ObjSetA$. Then the function
    $\PrcFunc{\PntSet}{\ObjSetA}$ is submodular.

    \lemlab{submodular}
\end{lemma}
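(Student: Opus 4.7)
The plan is to establish submodularity of $\PrcFunc{\PntSet}{\cdot}$ via its equivalent characterization in terms of diminishing marginal returns: for every $X \subseteq Y \subseteq \ObjSet$ and every $a \in \ObjSet \setminus Y$, I want to show that
\[
\PrcFunc{\PntSet}{X \cup \{a\}} - \PrcFunc{\PntSet}{X} \;\geq\; \PrcFunc{\PntSet}{Y \cup \{a\}} - \PrcFunc{\PntSet}{Y}.
\]
Since $\PrcFunc{\PntSet}{\ObjSetA} = \sum_{\pnt \in \PntSet} \fDist(\DistNN{\pnt}{\ObjSetA})$ is a sum over points and submodularity is preserved under non-negative linear combinations, it suffices to fix a single $\pnt \in \PntSet$ and prove the inequality for the one-point contribution $h(\ObjSetA) := \fDist(\DistNN{\pnt}{\ObjSetA})$.

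With $\pnt$ fixed, I would introduce the shorthand $d_X = \DistNN{\pnt}{X}$, $d_Y = \DistNN{\pnt}{Y}$, and $r = \DistNN{\pnt}{\{a\}}$. Two basic observations drive everything: (i) $d_Y \leq d_X$ because $X \subseteq Y$ (adding objects can only bring the nearest neighbor closer), and (ii) $\DistNN{\pnt}{X \cup \{a\}} = \min(d_X, r)$ and $\DistNN{\pnt}{Y \cup \{a\}} = \min(d_Y, r)$. The marginal gains from inserting $a$ are then
\[
\Delta_X = \fDist(\min(d_X,r)) - \fDist(d_X), \qquad \Delta_Y = \fDist(\min(d_Y,r)) - \fDist(d_Y),
\]
and I want $\Delta_X \geq \Delta_Y$.

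The core of the argument is a short case analysis based on how $r$ compares with $d_X$ and $d_Y$:
\begin{itemize}
\item If $r \geq d_X \geq d_Y$, then $\min(d_X,r) = d_X$ and $\min(d_Y,r) = d_Y$, so $\Delta_X = \Delta_Y = 0$.
\item If $d_X > r \geq d_Y$, then $\min(d_X,r) = r$ while $\min(d_Y,r) = d_Y$, giving $\Delta_X = \fDist(r) - \fDist(d_X) \geq 0$ (by monotone decrease of $\fDist$ applied to $r \leq d_X$) and $\Delta_Y = 0$.
\item If $d_X \geq d_Y > r$, then $\Delta_X = \fDist(r) - \fDist(d_X)$ and $\Delta_Y = \fDist(r) - \fDist(d_Y)$, so $\Delta_X - \Delta_Y = \fDist(d_Y) - \fDist(d_X) \geq 0$, again by $d_Y \leq d_X$ and monotonicity of $\fDist$.
\end{itemize}
In every case $\Delta_X \geq \Delta_Y$, which proves that each $h$ is submodular, and summing over $\pnt \in \PntSet$ yields submodularity of $\PrcFunc{\PntSet}{\cdot}$.

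There is essentially no hard step: the inequality is driven entirely by (a) $X \subseteq Y$ forcing $d_Y \leq d_X$ and (b) $\fDist$ being monotone decreasing. The only mildly subtle point is being careful that one treats the boundary configurations of $r$ relative to $d_X$ and $d_Y$ correctly (and, if desired, handling $X = \emptyset$ by the convention $\fDist(\infty) := \lim_{r \to \infty} \fDist(r)$, which fits into the first case without issue).
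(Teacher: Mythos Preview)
Your proof is correct and follows essentially the same approach as the paper: reduce to a single point by linearity of submodularity, use the diminishing-returns characterization, and handle the three cases based on the position of $r$ relative to $d_Y \le d_X$. The paper's argument is identical up to notation and the ordering of the cases.
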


\begin{proof}
    The proof is not hard and is included for the sake of
    completeness.  For a point $\pnt \in \PntSet$, it is sufficient to
    prove that the function $\fDist\pth{ \DistNN{\pnt}{\ObjSetA} }$ is
    submodular, as $\PrcFunc{\PntSet}{\ObjSetA}$ is just the sum of
    these functions, and a sum of submodular functions is submodular.

    To prove the latter, it is sufficient to prove that for any sets $X
    \subseteq Y \subseteq \ObjSet$, and an object $\obj \in \ObjSet
    \setminus Y$, 
    \[
    \fDistX{ \MakeSBig \DistNN{\pnt}{X \cup \brc{ \obj}}} -
    \fDistX{
       \MakeSBig \DistNN{\pnt}{X }}%
    \geq%
    \fDistX{ \MakeSBig \DistNN{\pnt}{Y \cup \brc{ \obj}}} - \fDistX{
       \MakeSBig \DistNN{\pnt}{Y }}.
    \]
    To this end, let $x$ and $y$ be the closest objects to $\pnt$ in
    $X$ and $Y$, respectively. Similarly, let $\ell_x, \ell_y,
    \ell_\obj$ be the distance of $\pnt$ to $x,y$ and $\obj$,
    respectively. The above then becomes
    \begin{equation}
        \fDistX{ \MakeSBig\! \min\pth{\ell_x, \ell_\obj} } - \fDistX{
           \MakeSBig \ell_x }%
        \geq%
        \fDistX{ \MakeSBig\! \min\pth{\ell_y, \ell_\obj} } - \fDistX{
           \MakeSBig \ell_y}.%
        \eqlab{mod}
    \end{equation}
    Observe that as $X \subseteq Y$, it holds that $\ell_x \geq
    \ell_y$, so $\fDist(\ell_y) \geq \fDist(\ell_x)$ as
    $\fDist$ is monotone decreasing.  Now, one of the following
    holds: 
    \begin{compactitem}
        \item If $\ell_\obj \leq \ell_y \leq \ell_x$ then \Eqref{mod}
        becomes $\fDistX{ \ell_\obj } - \fDistX{ \ell_x } \geq
        \fDistX{ \ell_\obj } - \fDistX{ \ell_y}$, which holds.%

        \item If $\ell_y \leq \ell_\obj \leq \ell_x$ then \Eqref{mod}
        becomes $\fDistX{ \ell_\obj } - \fDistX{ \ell_x } \geq
        \fDistX{ \ell_y } - \fDistX{ \ell_y}$, which is equivalent to
        $\fDistX{ \ell_\obj } \geq \fDistX{ \ell_x }$. This in turn
        holds by the decreasing monotonicity of $\fDist$.%

        \item If $\ell_y \leq \ell_x \leq \ell_\obj $ then \Eqref{mod}
        becomes $0 = \fDistX{ \ell_x } - \fDistX{ \ell_x } \geq
        \fDistX{ \ell_y } - \fDistX{ \ell_y} = 0$.
    \end{compactitem}
    We conclude that $\PrcFunc{\PntSet}{\cdot}$ is submodular.
\end{proof}

To solve our problem, using the framework of Chekuri \etal~\cite{cvz-sfmmm-11}, 
we need the following:
\begin{compactenum}[(A)]
    \item The target function is indeed submodular and can be computed
    efficiently. This is \lemref{submodular}.

    \item State an \LP that solves the fractional problem (and its
    polytope contains the optimal integral solution). This is just the
    original \LP, see \Eqref{l:p:indep}.

    \item Observe that our rounding (i.e., contention resolution)
    scheme is still applicable in this case. This follows by
    \lemref{contention}.
\end{compactenum}
One can now plug this into the algorithm of Chekuri \etal~\cite{cvz-sfmmm-11} 
and get an $\Omega(\alpha)$-approximation
algorithm, where $\alpha$ is the rounding scheme gap.  The algorithm
of Chekuri \etal \cite{cvz-sfmmm-11} uses a continuous optimization to
find the maximum of a multi-linear extension of the target function
inside the feasible polytope, and then uses this fractional value with
the rounding scheme to get the desired approximation.

We thus get the following.

\begin{problem}
    Let $\PntSet$ be a set of $n$ points in the plane, and let
    $\ObjSet$ be a set of $m$ objects in the plane.  Let $\fDist(r)$
    be a monotone decreasing function, which returns the amount
    of coverage a point gets if it is at distance $r$ from one of the
    regions of $\ObjSet$. Consider the scoring function that for an
    independent set $\ObjSetA \subseteq \ObjSet$ returns the total
    coverage it provides; that is,
    \[
    \PrcFunc{\PntSet}{\ObjSetA} = \sum_{\pnt \in \PntSet} \fDist\pth{
       \DistNN{\pnt}{\ObjSetA}}.
    \]
    We refer to the problem of computing the independent set
    maximizing this function as the \emphi{partial coverage problem}.
\end{problem}

\begin{theorem}
    Given a set of $n$ points in the plane, and set $m$ of unweighted
    objects in the plane with union complexity $O \pth{ \UnionComp{n}
    }$, one can compute, in polynomial time, an independent
    set. Furthermore, this independent set provides an $\Omega( n /
    \UnionComp{n} )$-approximation to the optimal solution of the
    partial coverage problem.

    \thmlab{main:2}
\end{theorem}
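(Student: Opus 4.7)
The plan is to invoke the contention-resolution / continuous-greedy framework of Chekuri \etal~\cite{cvz-sfmmm-11} after verifying the three ingredients listed in the paragraph preceding the theorem. That framework guarantees that, given (A) a monotone submodular target function, (B) an \LP relaxation whose feasible polytope is a valid relaxation of the integer feasibility constraints, and (C) a contention resolution scheme with gap $\alpha$, one can output in polynomial time a random feasible solution whose expected value is an $\Omega(\alpha)$-fraction of the optimum.

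First, I would check (A): the coverage function $\PrcFunc{\PntSet}{\cdot}$ is submodular by \lemref{submodular}, and it is clearly monotone and non-negative since $\fDist$ is monotone decreasing; it can be evaluated in polynomial time by computing, for each $\pnt \in \PntSet$, its nearest neighbour in the selected subset of $\ObjSet$. Next, for (B) I would reuse exactly the \LP from \Eqref{l:p:indep} (the linear objective is irrelevant here, as the framework only needs the polytope). Its vertex-coverage constraints $\sum_{\obj_i \ni \pnt} x_i \le 1$ ensure that every integer point in the polytope corresponds to an independent set, so the polytope is a valid relaxation of the independence constraints.

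For (C), I would appeal to \lemref{contention}: scaling the fractional solution by $1/\tau$ with $\tau = \Theta\pth{\UnionComp{\EnergyX{\ObjSet}}/\EnergyX{\ObjSet}}$ and running the resistance-ordered rounding procedure of \secref{alg} yields $\Prob{\obj_i \in \ISet \mid \obj_i \in \CSet} \geq 1/2$ for every $i$. This is precisely the defining property of a contention resolution scheme, with gap $\alpha = \Omega(1/\tau) = \Omega\pth{n/\UnionComp{n}}$, using $\EnergyX{\ObjSet} \le n$ and the fact that $\UnionComp{\cdot}/n$ is non-decreasing.

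Plugging these three ingredients into the framework of Chekuri \etal then produces, in polynomial time, a random independent set whose expected coverage is $\Omega\pth{n/\UnionComp{n}}$ times the optimum of the partial coverage problem, as claimed. The main obstacle I anticipate is a technical verification that our scheme satisfies the precise structural conditions (such as monotonicity of the selection probabilities) demanded by the framework; however, because the resistance order and the scaling parameter $\tau$ are oblivious to the particular subset $\CSet$ sampled and the conditioning argument in \lemref{contention} bounds the rejection probability uniformly, the required monotonicity follows by a routine coupling argument.
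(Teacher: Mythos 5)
Your proposal follows essentially the same route as the paper: verify (A) submodularity via \lemref{submodular}, (B) reuse the \LP of \Eqref{l:p:indep} as the relaxation polytope, and (C) note that \lemref{contention} gives a contention resolution scheme with gap $\Omega\pth{n/\UnionComp{n}}$, then invoke the framework of Chekuri \etal. The paper is just as terse as you are about the framework's finer structural requirements (e.g., monotonicity of the scheme), so your write-up matches its argument in substance and structure.
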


Observe that the above algorithm applies for any pricing  function
that is submodular. In particular, one can easily encode into this
function weights for the ranges, or other similar considerations.

\section{Weighted Rectangles}
\seclab{rectangles}

\subsection{The algorithm}

For the (original) independent set problem in the case of weighted
axis-aligned rectangles, we can solve the same \LP, where the set
$\VertX{\ObjSet}$ contains both intersection points and corners of the
given rectangles.

Define two subgraphs $\GraphA$ and $\GraphB$ of the intersection
graph: if the boundaries of $\obj_i$ and $\obj_j$ intersect zero or
two times, put $\obj_i\obj_j$ in $\GraphA$; if the boundaries
intersect four times instead, put $\obj_i\obj_j$ in $\GraphB$.

We first extract an independent set $\RSampleB$ of $\GraphA$ using the
algorithm of \thmref{main}.

It is well known (e.g., see \cite{ag-cp-60}) that $\GraphB$ forms a
perfect graph (specifically, a comparability graph), so find a
$\D$-coloring of the rectangles of $\RSampleB$ in $\GraphB$, where
$\D$ denotes the maximum clique size, i.e., the maximum depth in
$\ArrX{\RSample}$.  Let $\RSampleB'$ be the color subclass of
$\RSampleB$ of the largest total weight.  Clearly, the objects in
$\RSampleB'$ are independent, and we output this set.

\subsection{Analysis}

As in \lemref{first}, let $\ObjSetA$ be any subset of
$\ObjSet$.
Observe that if $\obj_i\obj_j\in\GraphA$, then $\obj_i$
contains a corner of $\obj_j$ or vice versa.  Letting
$V_j$ denote the corners of $\obj_j$, we have
\begin{align*}
    \sum_{\obj_i, \obj_j \in \ObjSetA, \obj_i\obj_j \in \GraphA} x_i x_j
    & \leq%
    \sum_{\obj_j \in \ObjSetA} \sum_{\pnt \in V_j} \sum_{\obj_i \in
       \ObjSetA, \pnt \in \obj_i} x_i x_j%
    \leq%
    \sum_{\obj_j \in \ObjSetA} \sum_{\pnt \in V_j} x_j \sum_{\obj_i \in
       \ObjSetA, \pnt \in \obj_i} x_i %
    \leq%
    \sum_{\obj_j \in \ObjSetA} \sum_{\pnt \in V_j} x_j  %
    =%
    \sum_{\obj_j \in \ObjSetA} 4 x_j  %
    \\%
    &=%
    4 \Energy\pth{\ObjSetA}.
\end{align*}

Applying the same analysis as before, we conclude that the
expected total weight of $\RSampleB$ (i.e., the independent set of $\GraphA$)
computed by the algorithm is of size $\Omega( \Opt)$.

To analyze $\RSampleB'$, we need a new lemma which bounds the maximum
depth of $\RSample$:

\begin{lemma}
    \ $\D = O(\log n/\log\log n)$ with probability at least $1-1/n$.

    \lemlab{depth}
\end{lemma}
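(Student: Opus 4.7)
\medskip

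The plan is to reduce the depth of $\RSampleB$ to the depth of the candidate set $\CSet$ (since $\RSampleB \subseteq \ISet \subseteq \CSet$ in the algorithm of \secref{alg}), and then to bound the maximum depth of $\CSet$ by a standard concentration argument. The key observation is that since the rectangle \LP contains the constraint $\sum_{\obj_i \ni \pnt} x_i \leq 1$ at every vertex of $\ArrX{\ObjSet}$ (including corners), and the depth function is constant on each face of $\ArrX{\ObjSet}$, we actually have $\sum_{\obj_i \ni \pnt} x_i \leq 1$ at \emph{every} point $\pnt$ in the plane.

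For the rectangle setting, the contribution $\sum_{\obj_i\obj_j \in \GraphA} x_ix_j = O(\Energy(\ObjSetA))$ derived above plays the role of $\UnionComp{\cdot}$ being linear, so \lemref{resistance:single:object} yields a constant resistance bound, and hence $\tau = O(1)$ in the algorithm. Thus each object $\obj_i$ is sampled into $\CSet$ independently with probability $x_i/\tau$, and for any point $\pnt$ the expected depth is
\[
\mu \;=\; \sum_{\obj_i \ni \pnt} \frac{x_i}{\tau} \;\leq\; \frac{1}{\tau} \;=\; O(1).
\]
A Chernoff bound for sums of independent Bernoulli variables then gives $\Prob{\depth_\CSet(\pnt) > k} \leq (e\mu/k)^k$ for any $k > \mu$.

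Setting $k = c\log n/\log\log n$ with $c$ a sufficiently large constant, the exponent becomes
\[
k\log(k/(e\mu)) \;=\; \Theta\!\pth{\frac{\log n}{\log\log n}\cdot \log\log n} \;=\; \Theta(c\log n),
\]
so $\Prob{\depth_\CSet(\pnt) > k} \leq n^{-3}$ for $c$ large enough. Since $\depth_\CSet(\cdot)$ takes a single value on each face of $\ArrX{\ObjSet}$, and $\ArrX{\ObjSet}$ has only $O(n^2)$ faces, a union bound over one representative point per face yields $\Prob{\D_\CSet > k} \leq O(n^2)\cdot n^{-3} = O(1/n)$. Since $\D \leq \D_\CSet$, the claim follows. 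The only genuinely delicate step is verifying that $\tau$ can indeed be taken to be a constant for the $\GraphA$-restricted analysis; everything else is a routine Chernoff-plus-union-bound argument over the $O(n^2)$ cells of the arrangement.
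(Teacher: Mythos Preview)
Your proof is correct and takes essentially the same approach as the paper: bound the expected depth at a fixed point by $O(1)$ via the \LP constraint $\sum_{\obj_i\ni\pnt}x_i\le 1$, apply a Chernoff bound to get failure probability roughly $(e/t)^t$ at threshold $t=\Theta(\log n/\log\log n)$, and union-bound over the $O(n^2)$ vertices of $\ArrX{\ObjSet}$. The paper is slightly terser---it writes the sample as $\RSample$ with mean $\sum_{\obj_i\ni\pnt}x_i\le 1$ and leaves the reduction from $\RSampleB$ to the independently sampled candidate set implicit---whereas you make that reduction and the fact that $\tau=O(1)$ explicit, but this is a presentational difference rather than a different argument.
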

\begin{proof}
    Fix a parameter $t>1$.  Fix a point $\pnt\in\VertX{\ObjSet}$.  The
    depth of $\pnt$ in $\ArrX{\RSample}$, denoted by
    $\depth(\pnt,\RSample)$, is a sum of independent 0-1 random
    variables with overall mean $\mu = \sum_{\pnt\in \obj_i}x_i \le
    1$.  By the Chernoff bound~\cite[page~68]{mr-ra-95},
    \[
    \Prob{  \depth(\pnt,\RSample) > (1+\delta)\mu } \:<\:
    \left[\frac{e^\delta}{(1+\delta)^{1+\delta}}\right]^\mu 
    \] 
    for any $\delta>0$ (possibly large).  By setting $\delta$ so that
    $t=(1+\delta)\mu$, this probability becomes less than $(e/t)^t$.
    Since $|\VertX{\ObjSet}|=O(n^2)$, the probability that $\D > t$ is
    at most $O((e/t)^t n^2)$, which is at most $1/n$ by setting the
    value of $t$ to be
    $\Theta(\log n/\log\log n)$.
\end{proof}

By construction of $\RSampleB'$, we know that
\[
\sum_{\obj_i \in \RSampleB'} w_i\ \ge\ \frac{1}{\D}\sum_{\obj_i \in
   \RSampleB} w_i\ \ge\ \frac{1}{t}\sum_{\obj_i \in \RSampleB} w_i -
\Opt\cdot 1_{\D>t}
\]
where $1_A$ denotes the indicator variable for event $A$.
With $t=\Theta(\log n/\log\log n)$, we conclude that
\begin{align*}
\Ex{ \sum_{\obj_i \in \RSampleB'} w_i } 
&\ge  \Omega(\log\log n/\log n) \Ex{\sum_{\obj_i \in \RSampleB} w_i}
-\Opt/n\\
& \ge \Omega(\log\log n/\log n)\cdot \Opt.
\end{align*}

\subsection{Remarks}

\paragraph{Derandomization.}
This algorithm can also be derandomized by the method of conditional
expectations.  The trick is to consider the following random variable
\[ 
Z\ :=\ \frac{1}{t}\sum_{\obj_i \in \RSampleB} w_i -
\Opt\cdot\sum_{\pnt\in\VertX{\ObjSet}}
(1+\delta_\pnt)^{\depth(\pnt,\RSample)-t},
\]
where $\delta_\pnt$ is the $\delta$ from the proof of \lemref{depth}
and $t$ is the same as before.  This variable $Z$ lower-bounds
$\sum_{\obj_i \in \RSampleB'} w_i$ (the bound is trivially true if
$\D>t$, since $Z$ would be negative).  Our analysis still implies that
$\Ex{Z} \ge \Omega(\log\log n/\log n)\cdot \Opt$ 
(since the standard proof of the Chernoff bound \cite[pages 68--69]{mr-ra-95}
actually shows that
    \[
    \Prob{ \MakeBig\depth(\pnt,\RSample) > (1+\delta)\mu } \:<\:
    \frac{\Ex{(1+\delta)^{\depth(\pnt,\RSample)}}}
         {(1+\delta)^{(1+\delta)\mu}} \:\le\:
    \left[\frac{e^{\delta}}{(1+\delta)^{1+\delta}}\right]^\mu,
    \] 
which for $\delta=\delta_\pnt$ and $\mu\le 1$ implies
$\Ex{(1+\delta_\pnt)^{\depth(\pnt,\RSample)-t}} < (e/t)^t$).  
The advantage of working with $Z$ is that we can calculate
$\Ex{Z}$ exactly in polynomial time, even when conditioned to the
events that some objects are known to be in or not in $\RSample$
(since $\depth(\pnt,\RSample)$ is a sum of independent 0-1 random
variables, making $(1+\delta_\pnt)^{\depth(\pnt,\RSample)}$
a product of independent random variables).

We have thus proved:
\begin{theorem}
    Given a set of $n$ weighted axis-aligned boxes in the plane, one
    can compute in polynomial time an independent set of total weight
    $\Omega(\log\log n/\log n)\cdot \opt$, where $\opt$ is the maximum
    weight over all independent sets of the objects.
\end{theorem}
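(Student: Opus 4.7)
The plan is to solve the same \LP relaxation as in \secref{alg}, where now $\VertX{\ObjSet}$ additionally contains the four corners of each rectangle, and then exploit the special geometry of axis-aligned rectangles to work around the fact that rectangles need not have linear union complexity. I would split the intersection graph into $\GraphA$ (pairs whose boundaries intersect $0$ or $2$ times, so that one rectangle contains a corner of the other) and $\GraphB$ (pairs whose boundaries ``cross'' in $4$ points), and treat the two subgraphs separately.

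For $\GraphA$, I would run the Resistance-based contention-resolution algorithm of \secref{alg} verbatim. The only ingredient that needs replacement is \lemref{first}: a direct charging argument using the four corners of each rectangle together with the LP constraint $\sum_{\obj_i \ni \pnt} x_i \le 1$ yields
\[
\sum_{\obj_i\obj_j \in \GraphA,\ \obj_i,\obj_j\in \ObjSetA} x_i x_j \ \le\ 4\,\EnergyX{\ObjSetA},
\]
which is all that the proof of \lemref{contention} actually uses. This outputs an independent set $\RSampleB$ of $\GraphA$ with $\Ex{w(\RSampleB)} = \Omega(\Opt) \ge \Omega(\opt)$. For $\GraphB$, I would use the classical fact that the crossing graph of axis-parallel rectangles is a comparability graph (hence perfect), so a proper $\D$-coloring of $\GraphB$ restricted to $\RSampleB$ can be computed in polynomial time, where $\D$ is the maximum depth in $\ArrX{\RSampleB}$, and keep the heaviest color class $\RSampleB'$.

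The main obstacle is bounding $\D$ and then derandomizing within the claimed time. For the depth, the LP ensures that every arrangement vertex has fractional depth at most $1$, so a Chernoff tail $(e/t)^t$ combined with a union bound over the $O(n^2)$ arrangement vertices gives $\D = O(\log n/\log\log n)$ with probability $\ge 1-1/n$; together with $\Ex{w(\RSampleB)} = \Omega(\Opt)$ this already produces an expected weight of $\Omega((\log\log n/\log n)\,\opt)$ for $\RSampleB'$. To derandomize, I would apply the method of conditional expectations on the single potential
\[
Z \ =\ \frac{1}{t}\sum_{\obj_i \in \RSampleB} w_i \ -\ \Opt \cdot \sum_{\pnt \in \VertX{\ObjSet}} (1+\delta_\pnt)^{\depth(\pnt,\RSample)-t},
\]
with $t = \Theta(\log n/\log\log n)$ and $\delta_\pnt$ chosen as in the proof of \lemref{depth}. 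The potential pointwise lower-bounds $w(\RSampleB')$ (trivially negative when $\D > t$, and otherwise implied by the color-class bound), its expectation is $\Omega((\log\log n/\log n)\,\opt)$ by the moment-generating form of Chernoff, and it is exactly computable under any partial conditioning because $(1+\delta_\pnt)^{\depth(\pnt,\RSample)}$ factors as a product of independent Bernoulli contributions. Sweeping through the rectangles and at each step fixing the inclusion decision so as to keep $\Ex{Z}$ non-decreasing then yields the deterministic polynomial-time algorithm meeting the stated guarantee.
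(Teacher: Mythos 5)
Your proposal is correct and follows essentially the same route as the paper: the same split into the corner-containment graph $\GraphA$ (handled by the LP/contention-resolution algorithm, with the $4\EnergyX{\ObjSetA}$ charging bound replacing \lemref{first}) and the crossing graph $\GraphB$ (handled via the comparability-graph coloring), the same Chernoff-plus-union-bound depth estimate $\D = O(\log n/\log\log n)$, and the same derandomization via conditional expectations on the potential $Z$. No gaps to report.
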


\paragraph{Higher dimensions.}
By a standard divide-and-conquer method~\cite{aks-lpmis-98}, we get an
approximation factor of $O(\log^{d-1}n/\log\log n)$ for weighted
axis-aligned boxes in any constant dimension $d$.

\subsection*{Acknowledgments}
We thank Esther Ezra for discussions on the discrete version of the
independent set problem considered in Section~\ref{sec:discrete}. The
somewhat cleaner presentation in the paper, compared to the
preliminary version \cite{ch-aamis-09}, was suggested by Chandra
Chekuri. The results of \secref{submodular} were inspired by
discussions with Chandra.

\bibliographystyle{alpha} 
\bibliography{\si{w_indep}}





\end{document}